\newtheorem{thm}{Theorem}
\begin{document}

\preprint{APS/123-QED}

\title{Systematic Approach to Hyperbolic Quantum Error Correction Codes}

\author{Ahmed Adel Mahmoud}
\email{ahmed.mahmoud@usask.ca}
\affiliation{Centre for Quantum Topology and Its Applications (quanTA), University of Saskatchewan, Saskatoon, SK, Canada}
\affiliation{Department of Mathematics and Statistics, University of Saskatchewan, Saskatoon, SK, Canada}

\author{Kamal Mohamed Ali}
\affiliation{Centre for Quantum Topology and Its Applications (quanTA), University of Saskatchewan, Saskatoon, SK, Canada}

\author{Steven Rayan}
\email{rayan@math.usask.ca}
\affiliation{Centre for Quantum Topology and Its Applications (quanTA), University of Saskatchewan, Saskatoon, SK, Canada}
\affiliation{Department of Mathematics and Statistics, University of Saskatchewan, Saskatoon, SK, Canada}

\begin{abstract}
Quantum error correction codes defined on hyperbolic lattices leverage the unique geometric properties of the hyperbolic space to enhance the performance of quantum error correction. By embedding qubits in hyperbolic lattices, these codes achieve higher encoding rates and lower qubit overhead compared to those defined on conventional Euclidean lattices. Building on recent advances in hyperbolic crystallography, we introduce a unified framework for the systematic construction and scalable benchmarking of CSS quantum error correction codes on hyperbolic lattices. A central component of this framework is the Hyperbolic Cycle Basis algorithm, which employs graph-theoretic methods to efficiently identify all plaquette cycles (parity-check supports) and nontrivial cycles (logical operators). This enables scalable and automated benchmarking of a broad class of CSS codes defined on hyperbolic geometries. We apply this framework to construct and simulate two representative hyperbolic quantum error correction codes (HQECCs), evaluating key performance metrics such as encoding rate, error threshold, and code distance for different sublattices. While HQECCs serve as concrete examples, the framework can be adapted to a wide range of CSS codes, including those with more intricate stabilizer structures such as Floquet codes. This work establishes a foundation for systematic exploration and benchmarking of CSS codes on hyperbolic lattices, paving the way toward practical, high-performance quantum error correction.
\end{abstract}

\keywords{Hyperbolic Quantum Error Correction Codes, Hyperbolic Lattice, Bravais Lattice, Riemann Surface, Hyperbolic Cycle Basis, Error Threshold.}

\maketitle

\section{Introduction}
Quantum error correction is a prerequisite for the scalable and practical implementation of quantum computing, ensuring computational reliability in the presence of noise and decoherence \cite{gottesman2009introduction}. Among the various families of quantum error correction codes, quantum low-density parity-check (LDPC) codes have gained significant attention due to their favorable properties. In particular, their parity-check operators act on a constant number of qubits, and each qubit participates in a constant number of parity-check measurements, making them highly efficient for fault-tolerant quantum computation \cite{terhal2015quantum, breuckmann2021quantum}. Over the past few years, several LDPC codes have been developed, with the toric code, introduced by Kitaev \cite{kitaev2003fault}, being one of the most well-known examples. The toric code offers high error tolerance but protects only two logical qubits regardless of lattice size \cite{dennis2002topological}. This limitation was overcome by extending the framework to hyperbolic lattices, leading to the discovery of HQECCs \cite{breuckmann2016constructions, breuckmann2017hyperbolic, albuquerque2009topological, kim2007quantum} Like the toric code, HQECCs offer high error tolerance but with the crucial advantage that the number of logical qubits grows linearly with physical qubits. This advantage of hyperbolic lattices in topological quantum error correction is further confirmed by state-of-the-art Floquet codes, which achieve
better performance and lower qubit overhead on hyperbolic lattices than on their original Euclidean honeycomb design \cite{hastings2021dynamically, gidney2021honeycomb, higgott2024constructions}. These advances established a clear paradigm: hyperbolic lattices are a fundamentally better foundation
for topological quantum error correction, yet they expose a critical gap between theoretical advances in topological quantum error correction and practical quantum hardware. The design space of hyperbolic lattices is infinitely vast compared to Euclidean alternatives \cite{ratcliffe2006foundations}. Yet, out of this immense family of lattices, it is not known which hyperbolic lattice is optimal for a given topological quantum error correction code and noise profile. Answering this question is essential not only for benchmarking current quantum hardware but also for guiding the design of next-generation quantum processors optimized for topological quantum error correction. To answer this question, a unified framework for constructing and benchmarking topological codes on hyperbolic lattices is needed. In this work, we present a systematic framework for constructing and benchmarking CSS codes on hyperbolic lattices, building on recent advances in hyperbolic crystallography \cite{boettcher2022crystallography}. This framework is applicable to any hyperbolic $\{p,q\}$ tessellation of the Poincar\'e disk with an underlying $\{p_B,q_B\}$ Bravais lattice. A central component of this framework is the Hyperbolic Cycle Basis algorithm, which, to the best of our knowledge, is the first systematic and scalable method for identifying all plaquette cycles (representing parity-check supports) in a hyperbolic tessellation, as well as all non-trivial cycles (representing logical operators) that generate the first homology group $H_1(M)$ of the underlying Riemann surface $M$.

To demonstrate the versatility of this framework, we apply it to simulate two HQECCs based on two hyperbolic tessellations of the Poincar\'e disk, namely $\{8,3\}$ and $\{10,3\}$. These tessellations correspond to two different Bravais lattice structures, each associated with a unique Fuchsian group. The unit cells of these lattices are embedded in genus-2 Riemann surfaces, obtained by compactifying the corresponding Bravais lattices $\{8,8\}$ and $\{10,5\}$, respectively.

Finally, even though this work focuses mainly on quantum error correction codes, the Hyperbolic Cycle Basis algorithm also has useful applications in condensed matter physics. In the spectral analysis of hyperbolic materials derived from kagome-like lattices \cite{kollar2019hyperbolic}, it has been shown that the ground state exihibts a highly degenerate flat band whose eigenstates are highly localized and therefore dubbed compact localized states \cite{Bzdusek2022, Kollar2020}. Each of these states is supported exclusively on an independent cycle—either trivial or non-trivial—of the underlying lattice.  As a result, the full set of compact localized states naturally forms a cycle basis of the underlying hyperbolic graph, a structure that can be efficiently identified using the HCB algorithm.

\section{Mathematical Preliminaries}
\subsection{Hyperbolic Geometry}\label{HG}
Two models of hyperbolic geometry are equivalent. The first model is the upper-half plane $\mathbb{H} = (z \in \mathbb{C} :Im(z) >0 )$ with boundary $\partial \mathbb{H} = \mathbb{R} \cup  \{\infty$\}. The second model is the Poincar\'e disk model $\mathbb{D} = \{z' \in \mathbb{C} : |z'|<1\}$ with boundary $\partial \mathbb{D} = \{z \in \mathbb{C} :  |z| =1  \}$. The hyperbolic space $\mathbb{H}$ is equipped with the hyperbolic metric
\begin{equation}
    ds^2 = \frac{dx^2 + dy^2}{y^2} .
\end{equation}
An isometry between the two spaces $h: \mathbb{H} \rightarrow \mathbb{D}$ is given by
\begin{equation}
    h(z) = \frac{zi+1}{z+i},
\end{equation}
where $z=x+iy \in \mathbb{C}$. The metric induced on $\mathbb{D}$ by $h$ is given by
\begin{equation}
    ds^2 = \frac{4|dz|^2}{(1 - |z|^2)^2}.
\end{equation}
Since the two models are equivalent, one can work in either model. However, since the Poincar\'e disk model is a bounded subset of the Euclidean plane, it is more convenient for visualization. Therefore, we will utilize the Poincar\'e disk model throughout this paper. 

The hyperbolic distance between any two points $z_1,z_2 \in \mathbb{D}$ is given by
\begin{equation}
    d(z_1,z_2) = \mbox{arcosh} \left(1 + \frac{2 |z_1 - z_2|^2}{(1-|z_1|^2)(1-|z_2|^2)}\right).
\end{equation}
Geodesics in $\mathbb{D}$ are circles that, when extended, are orthogonal to the boundary of the Poincar\'e disk and its diameter. The hyperbolic angle between two geodesics that intersect at a point is the usual Euclidean angle between the tangent vectors to these two geodesics. A hyperbolic polygon with $p$ edges, called a $p$-gon, is a convex closed set consisting of $p$ hyperbolic geodesic edges. The point at which two segments intersect is a vertex. A polygon is called regular if all its internal angles are equal. A regular tessellation of $\mathbb{D}$ is achieved by covering the Poincar\'e disk by regular $p$-gons that either do not overlap or overlap only at their boundaries. For convenience, we refer to regular tessellations as \emph{patterns}. Formally, a pattern is a finite hyperbolic graph embedded into a closed Riemann surface that determines the shape and size of the unit cell of a hyperbolic lattice. The Schläfli symbol of a pattern is $\{p,q\}$ if each face is a $p$-gon, and each vertex is surrounded by $q$ faces. To construct a $\{p,q\}$ pattern, one starts by constructing one polygon representing the unit cell of the pattern. Let $r$ be the radius of the polygon given by

\begin{equation}
r=\sqrt{\frac{\cos \left(\frac{\pi}{p}+\frac{\pi}{q}\right)}{\cos \left(\frac{\pi}{p}-\frac{\pi}{q}\right)}}.
\end{equation}
Then, the positions of the unit cell vertices in $\mathbb{D}$ are given by
\begin{equation}
    z_k = r e^{2 \pi ik/ p+\delta},
\end{equation}
where $k = 1,...,p$ and $\delta$ is an arbitrary phase.

Upon imposing periodic boundary conditions (PBCs), a hyperbolic $\{p,q\}$ pattern can be embedded in a closed Riemann surface of genus $g \geq2$ \cite{beardon2012geometry, katok1992fuchsian}. In this setting, the following combinatorial relations hold. Since each face contains $p$ edges and every edge is shared by two faces, counting face–edge incidences gives $pF=2E$. Likewise, since $q$ edges meet at every vertex and each edge connects two vertices, counting edge-vertex incidences gives $2E=qV$. By combining these relations, we obtain
\begin{equation}
    pF = 2 E = qV,
\end{equation}
where $F,E$ and $V$ are the number of faces, edges and vertices of the pattern respectively.
An important topological invariant that we shall use later is the Euler characteristic. Given a closed Riemann surface $M$ tessellated by $F$ faces, $E$ edges and $V$ vertices, the Euler characteristic is given by
\begin{equation}
    \chi(M) = F - E + V.
\end{equation}
If $\chi$ is even, then the tessellation can be embedded in an orientable surface $M$ of genus $g$; in this case \cite{stillwell1995geometry}
\begin{equation}
    \chi(M) = 2-2g.
\end{equation}
It follows that the number of faces $F$ and the genus $g$ of the closed Riemann surface are not independent. More concretely, consider the Gauss-Bonnet theorem that relates the curvature of a surface to its topology. It states that for a closed, orientable surface $M$, the genus $g$ of $M$ is proportional to its Gaussian curvature as follows \cite{do2016differential}
$$\int_M K \ dA = 4 \pi (1-g).$$
In the case of hyperbolic geonmetry $K=-1$; therefore, the area of the underlying Riemann surface $M$ is given by
\begin{equation}
    A(M)= 4\pi (g-1).
\end{equation}
\begin{figure*}[ht]
    \centering
    \begin{minipage}{0.35\linewidth}
        \centering
        \includegraphics[width=\linewidth]{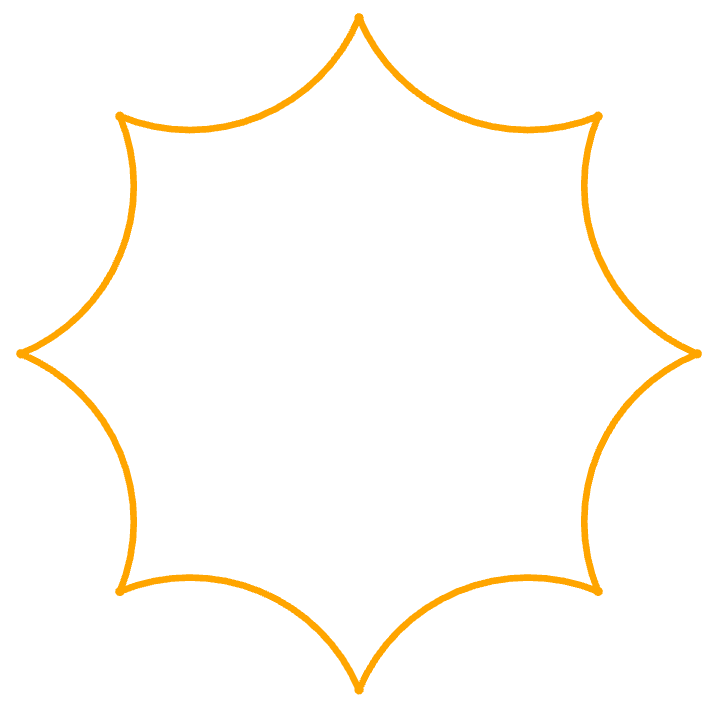}
    \end{minipage}
    \begin{minipage}{0.05\linewidth}
        \centering
        \begin{tikzpicture}
            \draw[thick,->, color=orange] (0,0) -- (1.5,0);
        \end{tikzpicture}
    \end{minipage}
    \begin{minipage}{0.55\linewidth}
        \centering
        \includegraphics[width=\linewidth]{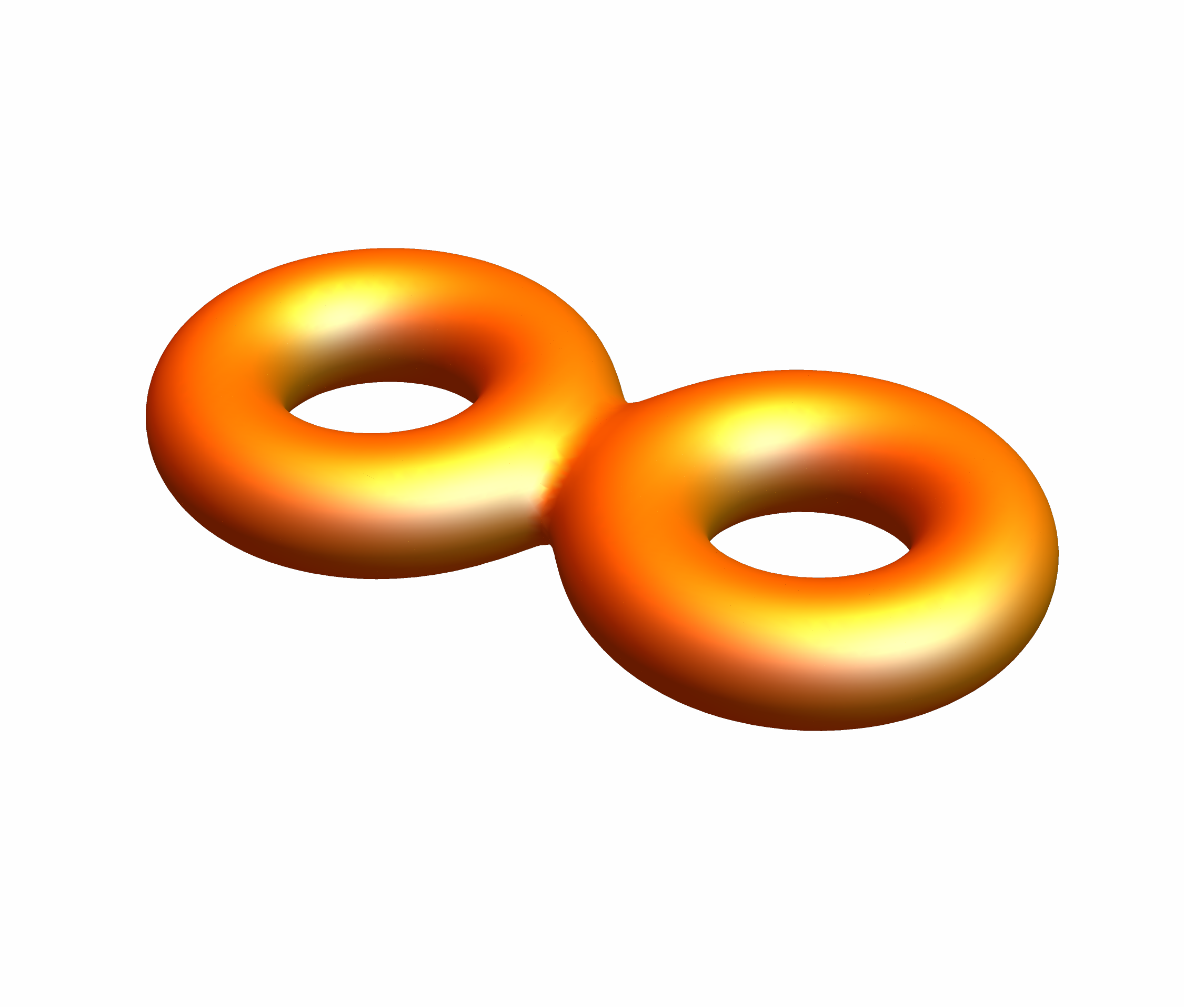}
    \end{minipage}

    \caption{The left half of the figure shows the unit cell of the $\{8,8\}$ Bravais lattice, which, when compactified, can be embedded in the genus-$2$ Riemann surface shown on the right.}
    \label{fig:polygon-riemann}
\end{figure*}

In hyperbolic geometry, the sum of the angles of a hyperbolic triangle is less than $\pi$. A special case of the Gauss-Bonnet theorem states that the area of a hyperbolic triangle $\Delta$ with internal angles $\alpha$, $\beta$, and $\zeta$ is given by
    \begin{equation}
        A(\Delta) = \pi - (\alpha + \beta + \zeta). 
    \end{equation}
Generally, the area of a regular hyperbolic $p$-gon $\Lambda$ is given by its angular defect
\begin{equation}
    A(\Lambda)= (p-2)\pi - \sum_{j=1}^p \alpha_j ,
\end{equation}
where $\alpha_j$ are the internal angles of $\Lambda$. In a regular $\{p,q\}$ pattern, the internal angles of each $p$-gon are given by $2\pi/q$. Therefore, the area of a $p$-gon is given by
\begin{equation}
    A(\Lambda) = \left(p-2-\frac{2p}{q}\right)\pi.
\end{equation}
Now, assume that a closed Riemann surface $M$ is tessellated by a regular $\{p, q\}$ pattern; then, $A(M)=F A(\Lambda)$. That is,
\begin{equation}
    4\pi (g-1)= F \left(p-2-\frac{2p}{q}\right)\pi,
\end{equation}
where $F$ is the number of faces of the underlying tessellation of $M$. Solving for $F$, we get
\begin{equation}
    F = \frac{4 q (g-1).}{pq - 2q - 2p}.
\end{equation}
This is in contrast to the Euclidean case in which the number of faces of the tessellation is independent of the genus of the embedding surface.

Let $(F,E,V)$ be a solution to (7), then $(nF,nE,nV)$ is also a solution to (7) for some $n \in \mathbb{Z}^+$. In this case, however, $\chi \rightarrow n\chi$. Therefore, increasing the number of faces is equivalent to increasing the genus of the underlying Riemann surface. Thus, for every pattern $\{p,q\}$, there is a minimal solution $(F_m,E_m,V_m)$ to (7), with a minimal number of faces $F_m$. 
An interesting class of patterns is one that satisfies $F_m = 1$. In this case, PBCs can be consistently defined for the associated polygon. That is, the edge pairing transformations are sufficient to embed the polygon into a closed Riemann surface of genus $g \geq 2$. In this case, $$(F,E,V)=(1,p/2,p/q);$$ 
therefore, $p$ is even and $p \geq q$. Four prominent infinite families of lattices satisfying this condition were presented in \cite{boettcher2022crystallography}. Throughout this paper, we will only be focusing on the two families:

\begin{equation}
\begin{aligned}
 \{4g, 4g\}&: (F_m, E_m, V_m) = (1, 2g, 1), \\
 \{2(2g+1), 2g+1\}&: (F_m, E_m, V_m) = (1, 2g+1, 2),
\end{aligned}
\end{equation}
where $g$ denotes the genus of the underlying Riemann surface. In particular, we focus on two patterns: $\{8,3\}$ and $\{10,3\}$. The Bravais lattice of the $\{8,3\}$ pattern, namely $\{8,8\}$, belongs to the $\{4g, 4g\}$ family for $g = 2$. Meanwhile, the Bravais lattices of the, $\{10,3\}$ pattern, namely $\{10,5\}$ belongs to the $\{2(2g+1),2g+1\}$ family for $g=2$. 

The closed Riemann surface obtained by imposing PBCs on the fundamental domain of the $\{8,8\}$ Bravais lattice was shown to possess the largest systole among all compact Riemann surfaces of genus $g = 2$ \cite{schmutz1993reimann}. This property is of particular importance in the context of quantum error correction, as the systole—defined as the length of the shortest non-contractible closed geodesic—provides a lower bound on the length of non-trivial logical operators. Consequently, a larger systole implies a greater code distance, enhancing the code’s robustness against logical errors. This property of the $\{8,8\}$ Bravais lattice makes the $\{8,3\}$ lattice an attractive candidate for constructing high-performance HQECCs. The compactified unit cell of this lattice, which realizes the corresponding genus-2 Riemann surface, is illustrated in Fig.~\ref{fig:polygon-riemann}.

\subsection{Fuchsian Groups}

Isometries of the Poincar\'e disk $\mathbb{D}$ are maps that preserve the hyperbolic metric and, in particular, the hyperbolic distance. We will be concerned with orientation-preserving isometries: these maps are given by elements of the group $PSU(1,1)=SU(1,1)/\{\pm \mathbb{I} \}$. Elements of $PSU(1,1)$ are given by linear transformations of the form
\begin{equation}
    z \rightarrow gz = \frac{az +b}{b^*z + a^*}, 
\end{equation}
where $g \in PSU(1,1)$, $z \in \mathbb{C}$ and $|a|^2-|b|^2=1$.
The full isometry group of $\mathbb{D}$ is given by the semi-direct product $G \ltimes \mathbb{Z}_2$, where elements of $\mathbb{Z}_2$ are the identity and the orientation reversing map $z \rightarrow z^*$. In other words, any isometry of $\mathbb{D}$ can be represented by an orientation-preserving map that is either combined or not combined with the orientation-reversing map $z \rightarrow z^*$. Had we used the upper-half plane model of the hyperbolic space, the isometry group would have been $PSL(2,\mathbb{R})$. However, because the two models are equivalent, the two groups $PSL(2,\mathbb{R})$ and $PSU(1,1)$ are isomorphic, as expected.

Every closed Riemann surface $M$ can be expressed as a quotient $S^2/\Gamma$, $\mathbb{R}^2/\Gamma$, or $\mathbb{H}/\Gamma$, where $ S^2$, $\mathbb{R}^2 $, and $\mathbb{H}$ denote the sphere, the Euclidean plane, and the hyperbolic plane, respectively, and $\Gamma$ is a discrete subgroup of isometries acting properly discontinuously on each space. If the Euler characteristic of $\chi(M)=0$, then
$g=1$ and $M$ can be described as $\mathbb{R}^2/\Gamma$. If
$\chi(M)>0$, then $g=0$ and $M$ can be described as $S^2/\Gamma$.
Finally, if $\chi(M)<0$, then $g\ge2$ and $M$ can be described as $\mathbb{H}/\Gamma$. 
We are concerned with hyperbolic surfaces for which $\chi(M)<0$ and $g \geq 2$. In this case, $\Gamma$ is called a Fuchsian group. A Fuchsian group $\Gamma$ is a discrete subgroup of $PSU(1,1)$; the elements of a Fuchsian group are the transformations that preserve the hyperbolic distance and hence leave the lattice invariant.

The space group of a hyperbolic pattern  $\{p,q\}$ is given by
$$SG_{\{p,q\}}= \langle a,b,c\,|\,a^2 = b^2 = c^2 = (ab)^2 = (bc)^p =(ca)^q = \mathbb{I} \rangle.$$
Elements of this group are the set of words consisting of $\{a,b,c,a^{-1},b^{-1},c^{-1} \}$ whereas group multiplication is simply a concatenation of words.
This group contains orientation-reversing elements and is therefore not a subgroup of $\Gamma$. If one considers the quotient by the orientation-reversing elements, one gets a Fuchsian group, that is a subgroup of $PSU(1,1)$ given by
$$ \Gamma= SG_{\{p,q\}}/\mathbb{Z}_2.$$
The Fuchsian group $\Gamma$ has the presentation
\begin{equation}
    \Gamma= \langle A,B \ | A^p = B^q = (AB)^2 = \mathbb{I} \rangle,
\end{equation}
where $A=bc$ and $B=ca$. Geometrically, $A$ is a rotation through the center of a face by an angle $\alpha = 2 \pi /p$ while $B$ is a rotation through a vertex by an angle $\beta = 2\pi /q$. 

Elements of a Fuchsian group are classified as elliptic, parabolic or hyperbolic if their trace, in the two-dimensional matrix representation, is less than, equal to, or greater than 2 respectively. Elliptic elements have one fixed point; therefore, these elements represent rotations and we denote them by $R(\theta)$, where $\theta$ is the angle of rotation. On the other hand, hyperbolic elements have no fixed points; therefore, they are considered translations (or boost transformations). We denote them by $T(\eta)$ where $\eta$ is the translation parameter.

A Fuchsian translation group is a torsion-free Fuchsian group, that is no element is of the form $\gamma^n=\mathbb{I}$. It is then obvious that $\Gamma$ given by (18) is not a Fuchsian translation group. The Fuchsian translation groups associated with Bravais lattices of the form $\{4g,4g\}$ and $\{2(2g+1),2g+1)\}$ in a hyperbolic space have been given an explicit representation in \cite{boettcher2022crystallography}. It has been shown that, for the two families, only $2g$ generators are independent, and the group is given by the following representation:
\begin{equation}
    \Gamma = \langle \gamma_1,...,\gamma_{p_B/2} \ | X_{\{p_B,q_B\}}= \mathbb{I}\rangle,
\end{equation}
where $\gamma_1$ is a boost transformation. The other generators $\gamma_m$, where $m = 1,...,2g$, are obtained by conjugating $\gamma_1$ with a rotation by a multiple of $\alpha = 2 \pi / p$
\begin{equation}
    \gamma_m = R((m-1)\alpha) \gamma_1 \ R(-(m-1)\alpha).
\end{equation} 
Furthermore, the constraint $X_{\{p_B,q_B\}}$ is the same for both families and only depends on  the $2g$ independent generators
\begin{equation}
    X_{\{p_B,q_B\}} = \gamma_1 \gamma_2^{-1} ... \gamma_{2g-1} \gamma_{2g}^{-1} \gamma_1^{-1} \gamma_2...\gamma_{2g-1}^{-1} \gamma_{2g}.
\end{equation}

From our previous discussion, we can conclude that hyperbolic polygon has a two-fold characteristic, it is the fundamental domain of a symmetry group and it can be used to construct closed surfaces via edge-pairing identification. We close this section by providing the necessary and sufficient conditions for a $p$-gon to be embedded in a closed Riemann surface \cite{stillwell1995geometry}. A hyperbolic $p$-gon $\Lambda$ can be embedded in a closed Riemann surface $M$ if the former is the fundamental domain of an orientation-preserving isometry group of $\mathbb{D}$ and if it satisfies the side and angle conditions given as follows
\begin{enumerate}
    \item for each edge $e$ in $\Lambda$ there is a unique edge $e'$ such that $e' = \gamma(e)$ for $\gamma \in \Gamma$, where $\gamma$ are the side-pairing transformations,
    \item For each set of vertices identified as a result of the edge-pairing transformations, the sum of the angles has to be equal to $2 \pi$.
\end{enumerate}
\begin{thm}{(Poincaré)}
    A compact polygon $P$ satisfying the side and angle conditions is the fundamental domain of the group $\Gamma$ generated by the side-pairing transformations of $P$.
\end{thm}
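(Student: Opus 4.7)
The plan is to run the classical developing-map argument. I would begin by taking the disjoint union of copies $P_\gamma$, one for each $\gamma$ in the abstract group generated by the side-pairings, and gluing $P_\gamma$ to $P_{\gamma \gamma_e}$ along the edge selected by the side-pairing $\gamma_e$ whenever $e$ is an edge of $P$. This yields a topological space $X$ carrying a piecewise-hyperbolic metric induced from identifying each $P_\gamma$ isometrically with $P \subset \mathbb{D}$.

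The first substantive task is to verify that $X$ is an honest hyperbolic surface. Interior points of each $P_\gamma$ are immediate, and for a point in the interior of an edge the side condition guarantees that exactly two copies are glued isometrically across that edge, producing a geodesically continuable neighborhood. The delicate case is at the vertices: I would trace the vertex-cycles produced by iterating the side-pairings around a given vertex of $P$, and then use the angle condition to conclude that the sum of interior angles around each cycle equals $2\pi$, so the glued neighborhood is a genuine hyperbolic disk rather than a cone point. This vertex bookkeeping is the step that uses the hypothesis in an essential way, and I expect it to be the main obstacle.

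Once $X$ is a hyperbolic surface, define the developing map $D: X \to \mathbb{D}$ by $D|_{P_\gamma} = \gamma$. By construction this is well-defined on the interfaces (thanks to the side condition) and locally isometric at the vertices (thanks to the angle condition), so $D$ is a local isometry. Since $P$ is compact, the metric on $X$ is complete, and $X$ is connected because every $P_\gamma$ is joined to $P_e$ by the chain of side-pairings used to spell $\gamma$. A complete local isometry from a connected hyperbolic surface to $\mathbb{D}$ is a covering map by the standard path-lifting argument.

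To conclude, invoke the simple connectivity of $\mathbb{D}$: any covering of $\mathbb{D}$ is trivial, so $D$ must be a global isometry. Translating this back gives that the family $\{\gamma(P)\}$ covers $\mathbb{D}$ with pairwise disjoint interiors, which is exactly the statement that $P$ is a fundamental domain for the group $\Gamma$ generated by the side-pairings. As a byproduct, the group relations encoded by the vertex cycles are the \emph{only} relations in $\Gamma$, and $\Gamma$ acts properly discontinuously, so it is a Fuchsian group in the sense required by the preceding subsection.
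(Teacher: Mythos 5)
The paper does not actually prove this statement: it is quoted as a classical result (Poincar\'e's polygon theorem) with a citation to Stillwell's \emph{Geometry of Surfaces}, immediately after the side and angle conditions are listed. So there is no in-paper argument to compare against; what you have written is a genuine proof sketch where the paper has none. Your route is the standard developing-map argument, and its architecture is sound: build the abstract glued space $X$ from copies $P_\gamma$, check it is a hyperbolic surface (edge points via the side condition, vertex points via the $2\pi$ angle condition), define $D|_{P_\gamma}=\gamma$, use compactness of $P$ to get completeness, conclude $D$ is a covering of the simply connected disk and hence an isometry, and read off the tessellation. You correctly identify the vertex-cycle bookkeeping as the step where the angle hypothesis does real work, and compactness as the source of completeness (which is exactly the hypothesis that fails for non-compact polygons, where extra conditions at ideal vertices are needed).

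Two places would need more care in a full write-up, and they are the classically delicate ones. First, the indexing set for the copies $P_\gamma$: if you index by the free group on the side-pairings, the vertex identifications force $P_\gamma$ and $P_{\gamma'}$ to be glued for distinct words $\gamma\neq\gamma'$ related by a vertex cycle, so $X$ must really be built as a quotient by an equivalence relation generated by both edge and vertex identifications (and one must check it is Hausdorff); your closing remark that the vertex relations are the only relations in $\Gamma$ is a consequence of the argument, not something available at the start. Second, ``compact implies complete'' for $X$ requires exhibiting a uniform $\epsilon>0$ such that every point of $X$ has an $\epsilon$-ball isometric to a hyperbolic ball, which is where the finiteness of the edge and vertex data and the exact $2\pi$ angle sums enter again. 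Neither issue is a flaw in the strategy, but they are precisely the gaps that historically plagued published proofs of this theorem, so they deserve explicit treatment rather than a passing mention.
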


\section{Finite Hyperbolic lattices}

In this section, we outline a method for constructing finite hyperbolic \(\{p,q\}\) lattices given the Fuchsian group of the underlying Bravais lattice. Let \(\{p,q\}\) be a hyperbolic lattice whose unit cell $U$ lies within the fundamental domain of the Bravais lattice \(\{p_B,q_B\}\). Denote by \(\Gamma\) the Fuchsian group associated with the Bravais lattice \(\{p_B,q_B\}\), defined by the group presentation given by (19).

A finite \(\{p,q\}\) lattice $\mathcal{L}$ can be constructed by replicating the unit cell $U$ \(N\) times, where each copy is generated by applying an element \(g \in \Gamma\) to $U$. To ensure that the resulting lattice remains connected, we adopt a hierarchical approach to constructing these copies. We begin by applying elements of \(\Gamma\) that consist of a single generator—specifically, the generators \(\gamma_1, \dots, \gamma_{2g}\) and their inverses. Next, we apply elements of the form \(\gamma_j \gamma_k\) (where \(\gamma_j^{-1} \neq \gamma_k\)), then proceed to elements with longer word representations, iterating this process until the desired lattice size is achieved.

This finite $\{p,q\}$ lattice can be embedded in a genus \(g \geq 2\) Riemann surface by imposing appropriate PBCs. Mathematically, imposing the PBCs corresponds to selecting a normal subgroup \(\Gamma_{PBC}\) of index \(N\) within the Fuchsian group \(\Gamma\). The quotient group \(\Gamma / \Gamma_{PBC}\) then represents a finitely generated residual translation group acting on the finite lattice.

To identify index-\(N\) normal subgroups of a finitely presented group, computational group theory provides several algorithms \cite{conder2005applications, firth2005algorithm}. In this work, we employ the freely available computational algebra system GAP \cite{GAP4}, utilizing the LINS package \cite{LINS0.9}, which implements the low-index normal subgroup algorithm described in \cite{dietze1974determining}. However, this implementation is computationally expensive, as it relies on the Todd-Coxeter coset enumeration procedure \cite{todd1936practical}. Two distinct sources of inefficiency arise. First, the enumeration becomes impractical for large subgroup index $N$ (e.g., $N>25$) due to the large number of subgroups enumerated. Second, Fuchsian groups associated with Bravais lattices whose unit cells are embedded in higher-genus Riemann surfaces $(g \geq 3)$ have presentations with many generators, which significantly increases the complexity of the enumeration even for moderate values of $N$. We emphasize that this limitation affects only the subgroup search stage; once a suitable subgroup is identified, the subsequent construction and simulation procedures of the framework apply without modification to arbitrary genus. In practice, Bravais lattices with unit cells embedded in higher-genus surfaces $(g\ge 3)$ typically correspond to tessellations with larger values of $p$ and $q$, leading to higher-weight parity-check operators and consequently less favorable error-correction performance.

To overcome these limitations, a more efficient algorithm was proposed in \cite{chen2024anderson}, which enables the computation of normal subgroups of large indices \(N\). This method exploits the fact that if \(H\) and \(K\) are normal subgroups of a group \(G\), then their intersection \(L = H \cap K\) is also a normal subgroup of \(G\). Moreover, if \(h\), \(k\), and \(l\) denote the indices of \(H\), \(K\), and \(L\) in \(G\), respectively, then  
\[
\operatorname{lcm}(k,h) \leq l \leq kh,
\]
where equality holds when \(k\) and \(h\) are coprime.

In practice, we first compute low-index normal subgroups using the LINS package and then take intersections of different subgroups to generate normal subgroups of higher indices. This approach significantly expands the set of accessible normal subgroups beyond what the LINS package alone can achieve, making it a powerful tool for constructing finite hyperbolic lattices embedded in Riemann surfaces.

Let \(\Gamma_{PBC}\) be a subgroup of index \(N\) in \(\Gamma\). Then, \(\Gamma\) has the following coset decomposition:  
\begin{equation}
    \Gamma = \Gamma_{PBC} T_1 \cup \Gamma_{PBC} T_2 \cup \dots \cup \Gamma_{PBC} T_N,
\end{equation}
where the union is disjoint, and  
\begin{equation}
    T = \{T_1, T_2, \dots, T_N\} \subset \Gamma
\end{equation}
is a set of coset representatives, with \(T_1\) being the identity element in \(\Gamma\). The set \(T\) is known as the \textit{right transversal} of \(\Gamma_{PBC}\) in \(\Gamma\). The finite $\{p,q\}$ lattice \(\mathcal{L}\) is then given by  
\begin{equation}
    \mathcal{L} = \bigcup_{j=1}^N T_j U.
\end{equation}
The choice of a transversal is not unique since an element \(T_j \in T\) can be multiplied by any element of \(\Gamma_{PBC}\) while still satisfying (22). However, a physically meaningful choice of a transversal is one that ensures \(\mathcal{L}\) forms a connected graph when nearest-neighbor vertices are linked \cite{maciejko2022automorphic}.

Topologically, imposing PBCs can be understood in the framework of covering theory. If \(\Gamma_{PBC}\) is a normal subgroup of \(\Gamma\) of finite index \(N\), then the quotient group \(\Gamma_N = \Gamma / \Gamma_{PBC}\) is a finitely presented group of order \(N\). Each quotient group \(\Gamma_N\) serves as the symmetry group of a finite \(\{p,q\}\) lattice with \(N\) faces of the Bravais lattice. 

The minimal representation of the infinite \(\{p_B, q_B\}\) lattice is the comapctified unit cell obtained as the quotient space \(M = \mathbb{D} / \Gamma\), which defines a Riemann surface of genus \(g \geq 2\). The Fuchsian group \(\Gamma\) is isomorphic to the fundamental group of the quotient space, i.e.,  
\[
\Gamma \cong \pi_1(M).
\]
Similarly, imposing PBCs on a finite lattice \(\mathcal{L}\) with \(N\) faces by selecting a symmetry group \(\Gamma_{PBC}\) results in a Riemann surface \(M_N\) of genus \(h\), where \(M_N\) is an \(N\)-sheeted cover of \(M\). In this case,  
\[
\Gamma_{PBC} \cong \pi_1(M_N),
\]
and the Euler characteristic of \(M_N\) is given by
\begin{equation}
\begin{aligned}
    \chi(M_N) &= N \chi(M), \\
    2 - 2h &= N (2 - 2g).
\end{aligned}
\end{equation}
Thus, we recover a well-known result in algebraic geometry, the Riemann-Hurwitz formula:  
\begin{equation}
    h = N(g-1) + 1.
\end{equation}

We conclude this section by presenting a systematic framework for constructing a periodic $\{p,q\}$ lattice with an underlying $\{p_B,q_B\}$ Bravais lattice, given a Fuchsian group $\Gamma$ of $\{p_B,q_B\}$ and a normal subgroup $\Gamma_{PBC}$ of index $N$. We outline this framework in Algorithm 1. This algorithm incorporates the procedure outlined in \cite{chen2024anderson, tummuru2024hyperbolic} as a subroutine to construct the adjacency matrix $A_{\{p,q\}}$ of the $\{p,q\}$ lattice. An illustrative example of applying this procedure to impose PBCs on a hyperbolic lattice is shown in Fig.~\ref{fig:tiling-transformation}.

\begin{algorithm}[]
    \caption{Construction of a Periodic Regular $\{p,q\}$ Lattice}
    \KwIn{
        \begin{itemize}
            \item The values $p$, $q$, $p_B$ and $q_B$ defining the $\{p, q\}$ lattice and the $\{p_B, q_B\}$ Bravais lattice.
            \item A normal subgroup $\Gamma_{PBC}$ of index $N$ of the Fuchsian group $\Gamma$ associated with $\{p_B,q_B\}$.
        \end{itemize}
    }
    \KwOut{
        The graph $G_{PBC}$ representing the $\{p,q\}$ lattice after imposing the PBCs.
    }

    \textbf{Step 1: Generate the Unit Cell}  
    \begin{itemize}
        \item Compute the positions of the unit cell vertices of the $\{p,q\}$ lattice given by (6).
    \end{itemize}

    \textbf{Step 2: Generate the Fuchsian Group Generators}  
    \begin{itemize}
        \item Construct the generators of the Fuchsian group defined by (19).
    \end{itemize}

    \textbf{Step 3: Generate the Planar Graph $G$}  
    \begin{itemize}
        \item Apply a set of elements $\{g_1, g_2, \dots, g_{N-1}\} \subset \Gamma$ to the unit cell vertices to create additional $N-1$ copies of the unit cell.
        \item Connect nearest neighbors to construct the planar graph $G$ representing the $\{p,q\}$ lattice prior to imposing the PBCs. 
        \item The resulting graph $G$ consists of bulk vertices, each having degree $q$, and edge vertices, each having a degree smaller than $q$.
    \end{itemize}

    \textbf{Step 4: Construct the Adjacency Matrix $A_{\{p,q\}}$}  
    \begin{itemize}
        \item Let $V$ be the adjacency matrix of the unit cell of the $\{p,q\}$ lattice and initialize $$A_{\{p,q\}} = \mathbb{I}_N \otimes V,$$ where $\mathbb{I}_N$ is the identity matrix of size $N \times N$.
        \item For each $j = 1, \dots, p_B$, define the intercell matrix $I_j$ to represent the edges connecting each unit cell to its neighbor in the direction of $\gamma_j$, where $\gamma_j$ belongs to the set of generators of $\Gamma$ and their inverses. The matrix $I_j$ specifies the intercell connectivity pattern induced by $\gamma_j$.
        \item Let $A_{\{p_B,q_B\}}$ be the adjacency matrix of the Bravais lattice.
        \item For each pair of neighboring faces $(A_{\{p_B,q_B\}})_{n,m} = 1$ connected by $\gamma_j$, update $A_{\{p,q\}}$:
        \[
        A_{\{p,q\}} \rightarrow A_{\{p,q\}} + U\otimes I_j
        \]
        where $U$ is an $N \times N$ matrix with elements all zeros except $U_{nm} = 1$.
    \end{itemize}

    \textbf{Step 5: Imposing the PBCs}  
    \begin{itemize}
        \item Augment the graph $G$ by adding the edges in $A_{\{p,q\}}$ that are not already present in $G$. These additional edges arise from imposing the PBCs.
        \item We denote the resulting graph $G_{PBC}$; this graph is embedded in a Riemann surface whose genus is given by equation (26).
    \end{itemize}

\end{algorithm}

\begin{figure*}[ht]
    \centering
    \begin{minipage}{0.45\linewidth}
        \centering
        \includegraphics[width=\linewidth]{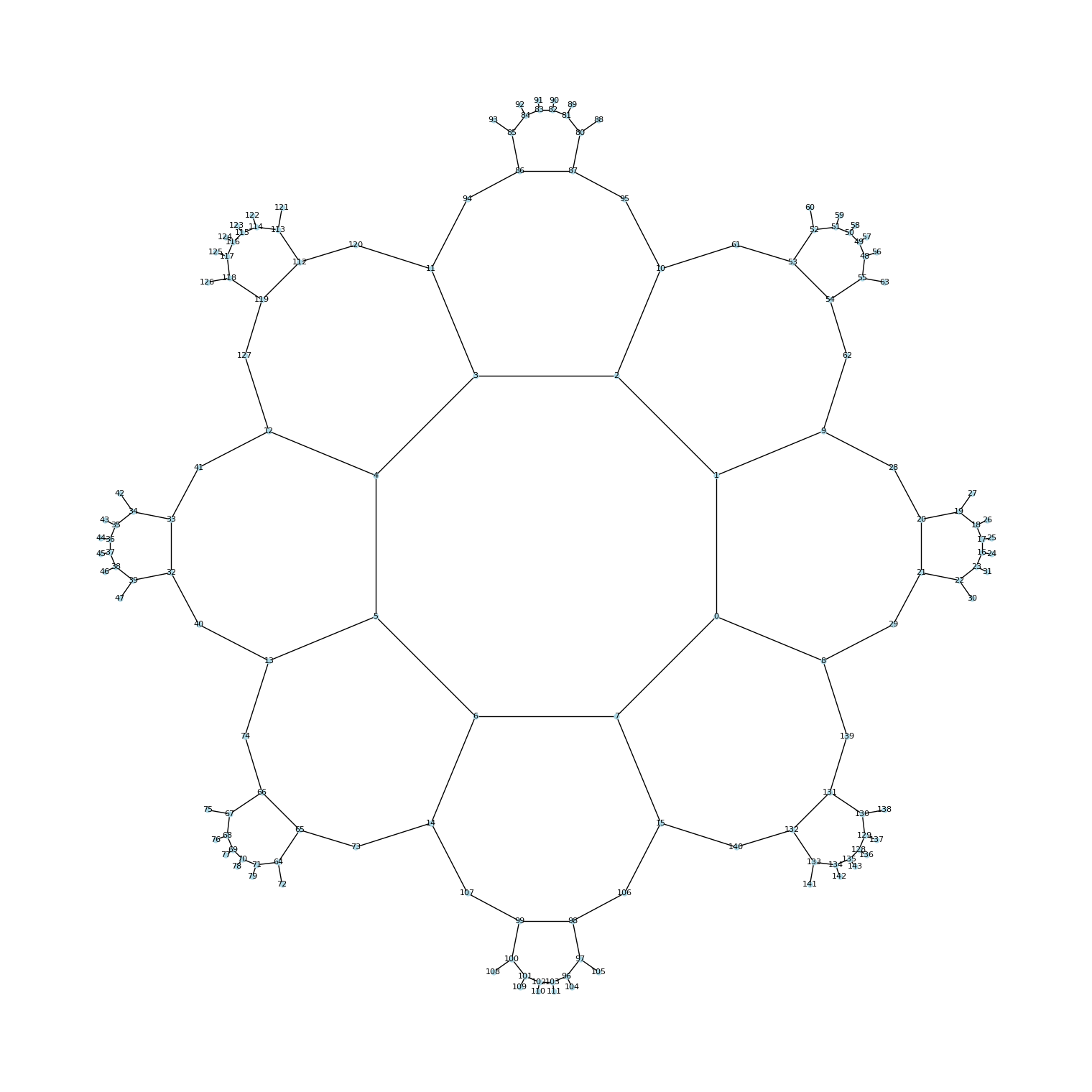}
    \end{minipage}
    \begin{minipage}{0.08\linewidth}
        \centering
        \begin{tikzpicture}
            \draw[thick,->] (0,0) -- (1.5,0);
        \end{tikzpicture}
    \end{minipage}
    \begin{minipage}{0.45\linewidth}
        \centering
        \includegraphics[width=\linewidth]{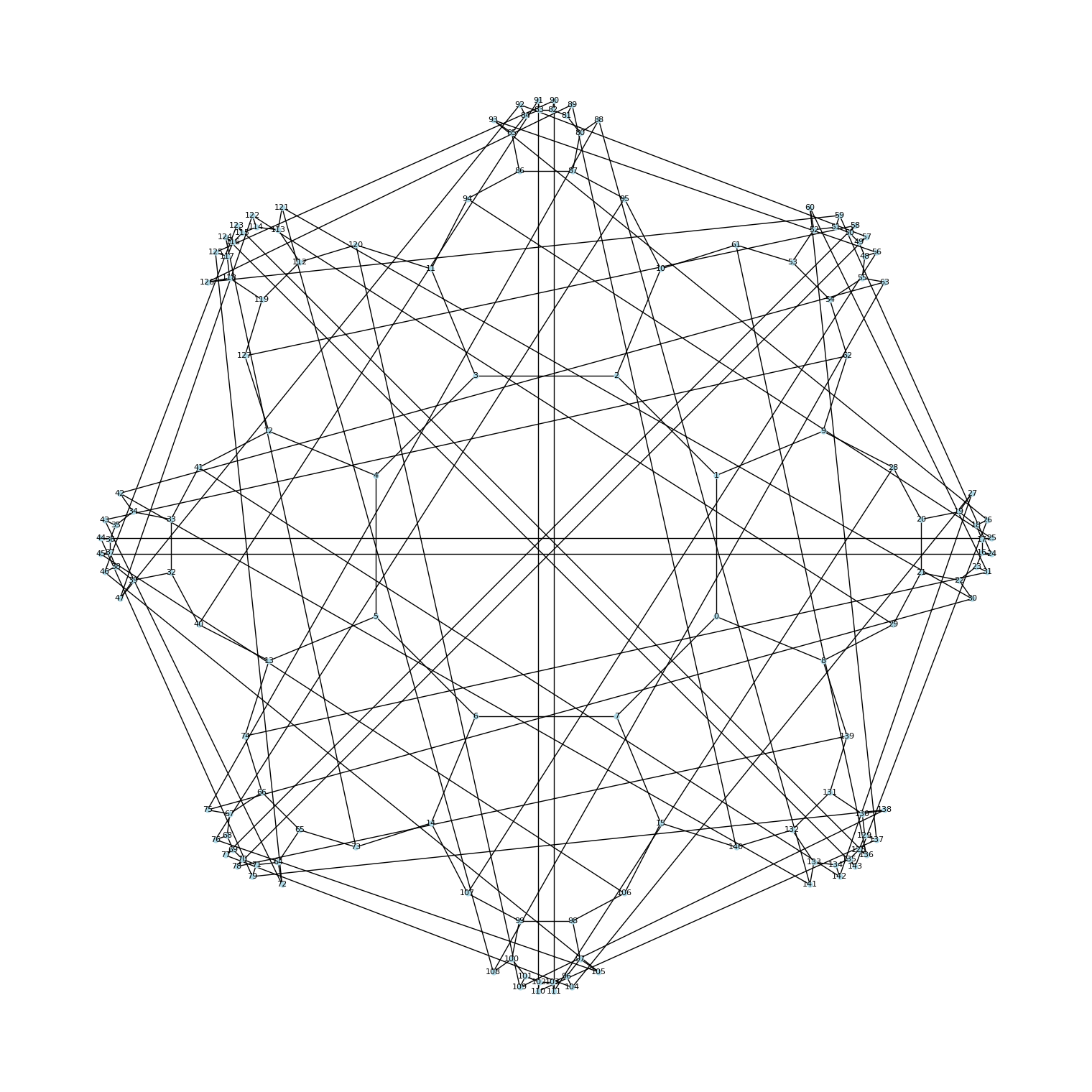}
    \end{minipage}

    \caption{The left half of the figure shows the finite $\{8,3\}$ lattice generated by $N=9$ faces of the Bravais lattice. The graph is generated by replicating the unit cell 8 times using the generators $\gamma_1,\dots,\gamma_4$ and their inverses. The right half of the figure shows the same lattice after imposing PBCs.}
    \label{fig:tiling-transformation}
\end{figure*}

\section{CSS Quantum Codes on Hyperbolic Lattices}
Topological quantum error correction codes introduced by Kitaev are a special class of stabilizer codes \cite{kitaev2003fault, freedman2003topological, gottesman1997stabilizer}. These codes are obtained by exploiting the topologies of closed surfaces and constructed from tessellations of these surfaces, which determines the placement of qubits and stabilizer operators. In the special case where the underlying geometry is Euclidean, one can tessellate a genus one torus by one of three patterns, $\{4,4\}$, $\{6,3\}$ or $\{3,6\}$. Embedding the qubits in a periodic $\{4,4\}$ pattern, one obtains the toric code, first introduced by Kitaev \cite{kitaev2003fault}. This construction generalizes naturally to hyperbolic geometry once the underlying closed Riemann surface has genus $g \geq 2$. In this section, we review the construction of a special class of topological quantum codes, the hyperbolic quantum error correction codes introduced in \cite{breuckmann2016constructions}. We start with a brief review of $\mathbb{Z}_2$ homology that is essential for the construction of HQECCs.

Let $M$ be a closed surface tessellated with a $\{p,q\}$ pattern. The faces, edges, and vertices of the tessellation define the $2$-, $1$-, and $0$-cells of a cellular decomposition of $M$, respectively. For each dimension $n \in \{0,1,2\}$, the $n$-chains, defined as $\mathbb{Z}_2$ linear combinations of the corresponding geometric elements, form a vector space over $\mathbb{Z}_2$ \cite{hatcher2002algebraic}. We denote this vector space by $C_n(M)$, or simply $C_n$ when the surface $M$ is clear from context. The boundary operators are defined by
$$\partial_n : C_n \rightarrow C_{n-1}. $$
That is, the boundary of an element $c_n \in C_n$ is the sum of all the $n-1$ cells incident on $c_n$. For example, let $v_1, v_2, v_3$ be the vertices of a triangle. The boundary of the edge $e_{12}$ connecting $v_1$ and $v_2$ is $\partial_1(e_{12})=v_2 + v_1$. Moreover, the boundary of the triangle $f_{123}$ is given by $\partial_2(f_{123})= e_3+e_2+e_1$, where  $e_1$, $e_2$, and $e_3$ are the three edges incident on that triangle.
The boundary operators satisfy the fundamental property
\begin{equation}
    \partial_{n-1} \circ \partial_n = 0,
\end{equation}
that is, the boundary of a boundary is zero.
The boundary operator $\partial_n$ defines two spaces, the cycle space $Z_1 = ker( \partial_2)$ and the boundary space $B_1 = im(\partial_{2})$. An obvious consequence of (27) is that $B_1 \subset Z_1$. The first homology group is defined as $H_1 = Z_1/B_1$. For an orientable surface, dimension of the first homology group is twice the genus $g$ of the surface, that is $\dim(H_1) = 2g$.

The dual map, the coboundary operator, is defined as follows
\begin{equation}
    \delta_n: C_n \rightarrow C_{n+1}.
\end{equation}
Working over $\mathbb{Z}_2$, we identify $C_n$ with its dual via the natural inner product 
$\langle c, c' \rangle = |c \cap c'| \pmod{2}$. Under this identification, the coboundary operator is the transpose of the boundary operator. Equivalently, for an $n$-cell, $\delta_n$ assigns the sum (modulo $2$) of all $(n+1)$-cells incident on it, and is extended linearly to arbitrary chains.

The coboundary operator defines the cocycle space 
$$Z^1 = \ker(\delta_1),$$
and the coboundary space 
$$B^1 = \operatorname{im}(\delta_0),$$
where $B^1 \subset Z^1$. The first cohomology group is defined as
$$H^1 = Z^1 / B^1.$$
With respect to the inner product above, one obtains the orthogonality relations
\begin{equation*}
    Z^1 = B_1^\perp,  
    \qquad 
    B^1 = Z_1^\perp.
\end{equation*}

Moreover, the groups $H_1(M)$ and $H^1(M)$ are related by the fact that $i$-cells of a tiled surface $M$ correspond to $(2-i)$-cells of the dual tiling $M^*$. In the present case $d=2$, and this correspondence induces an isomorphism
\begin{equation}
    * : C_i(M) \longrightarrow C_{2-i}(M^*).
\end{equation}
Under this identification, the coboundary operator on $M$ corresponds to the boundary operator on the dual tiling,
\begin{equation}
    \delta_i = *^{-1} \circ \partial_{2-i} \circ * .
\end{equation}

Now, we show how to utilize $\mathbb{Z}_2$ homology groups of a tiled surface to define topological quantum codes. A stabilizer code encoding $n$ physical qubits into $k$ logical qubits is a $2^k$ subspace $\mathcal{C} \subset \mathcal{H}$ of the Hilbert space of $n$ qubits $\mathcal{H} = (\mathbb{C}^2)^{\otimes n}$. The subspace $\mathcal{C} \subset \mathcal{H}$ is defined as the $+1$ eigenspace of an Abelian subgroup $S$ of the Pauli group. The transition from a tessellated surface $M$ to stabilizer codes is done by identifying all edges with qubits. The boundary of each face is identified with a $Z$ parity check operator while the coboundary of each vertex is identified with an $X$ parity check operator. This canonical set of operators are the generators of the stabilizer code. The weight of an operator is the number of qubits on which it acts non-trivially. Thus, for a $\{p,q\}$ tessellation of a closed surface $M$, the $Z$ parity check operators have weight $p$ while the $X$ parity check operators have weight $q$.

The number of physical qubits $n$ equals the number of edges, that is $n = \dim(C_1)$. On the other hand, the number of logical qubits is given by $n$ minus the constraints imposed by the stabilizer operators, that is
\begin{equation}
    \begin{aligned}
    k &= \dim(C_1)-\dim(B_1)-\dim(B^1), \\
    &= \dim(Z_1) - \dim(B_1), \\
    &= \dim(H_1),
    \end{aligned}
\end{equation}
where we used the fact that $$\dim(B^1)=\dim(Z_1^\perp)=\dim(C_1)-\dim(Z_1).$$

The distance of a stabilizer code is the minimum weight of a Pauli operator that commutes with all stabilizer generators but is not contained in the stabilizer group. In topological CSS surface codes with qubits placed on edges, the $Z$- and $X$-distances correspond to the lengths of the shortest homologically nontrivial cycles on the primal and dual tessellations, respectively; the code distance is $d=\min(d_X,d_Z)$.
The distance of a quantum CSS surface code can be computed by an algorithm due to Bravyi, as described in \cite{breuckmann2017hyperbolic}. For completeness, we outline this procedure in Algorithm 2.

\begin{algorithm}[H]
    \caption{Computation of the Distance $d_Z$ in a Topological CSS Code}
    \KwIn{
    \begin{itemize}
        \item Graph $G_{PBC} = (V, E)$ representing the finite hyperbolic lattice with PBCs.
        \item The set of logical operators $\{\bar{X}_1, \bar{X}_2, \dots, \bar{X}_k\}$.
    \end{itemize}
    }
    \KwOut{The distance $d_Z$ of the topological CSS code.}

    \For{$j = 1, \dots, k$}{
        Select a logical operator $\bar{X}_j$ and define its qubit support as $E(\bar{X}_j) \subset E$. Then, construct an auxiliary graph $\tilde{G}$ as follows:
        \begin{itemize}
            \item Create two copies of each vertex $v \in V$, labeled as $v^+$ and $v^-$.
            \item Initialize the edge set of $\tilde{G}$:
            \begin{itemize}
                \item For each edge $e = (u, v) \in E$:  
                \begin{itemize}
                    \item If $e \in E(\bar{X}_j)$, add edges $(u^+, v^-)$ and  $(u^-, v^+)$ to $\tilde{G}$.
                    \item If $e \notin E(\bar{X}_j)$, add edges $(u^+, v^+)$ and  $(u^-, v^-)$ to $\tilde{G}$.
                \end{itemize}
            \end{itemize}
        \end{itemize}

        Compute the shortest path distance $d(v^+, v^-)$ in $\tilde{G}$ for each vertex $v \in E(\bar{X}_j)$.
    }

    \textbf{Compute the code distance:}  
    Set $d_Z = \min d(v^+, v^-)$ over all $v \in E(\bar{X}_j)$ and all logical operators $\bar{X}_j$.

    \textbf{Computation of $d_X$:}  
    The same procedure can be applied to the set of logical operators $\{\bar{Z}_1, \bar{Z}_2, \dots, \bar{Z}_k\}$ to determine the distance $d_X$.
\label{alg:code_distance}
\end{algorithm}

\section{Hyperbolic Cycle Basis}

In this section, we present an algorithm for computing all plaquettes and logical operators in a HQECC. A plaquette is defined as a trivial cycle of length $p$, corresponding to a face of the finite $\{p, q\}$ hyperbolic lattice. In contrast, logical operators are associated with non-trivial cycles on the underlying Riemann surface $M$ of genus $g \geq 2$. While our method is inspired by the fundamental cycle basis algorithm of Paton \cite{paton1969algorithm} and the minimum cycle basis algorithm of Kavitha et al. \cite{kavitha2008algorithm}, it is fundamentally distinct. Rather than relying purely on graph-theoretic properties, our approach leverages the intrinsic geometry of the surface to construct a specific cycle basis that is essential for the implementation of HQECCs.

Let $G$ be a connected, finite, undirected graph with $E$ edges and $V$ vertices. A cycle $C$ in $G$ is any subgraph in which every vertex has even degree. Each cycle $C$ can be represented by an incidence vector $I_C \in \mathbb{Z}_2^{\otimes E}$, where $I_C^e = 1$ if and only if edge $e$ is part of the cycle $C$ (i.e., $e \in C$). The vector space generated by these incidence vectors is known as the cycle space of $G$.  A cycle basis of $G$ is a set of cycles that spans the cycle space. That is, given a cycle basis $\{C_1, C_2, ..., C_n\}$ of the graph $G$, any cycle $C$ in $G$ can then be expressed as:  
\begin{equation}
C = \sum_{j=1}^{n} \alpha_j C_j,
\end{equation}
where $\alpha_j \in \mathbb{Z}_2$.  

A \emph{fundamental cycle basis} (FCB) is a specific type of cycle basis for a graph \( G \). To construct an FCB, one first selects a spanning tree \( T(G) \) of the graph. For each edge \( e \in G \setminus T(G) \) that is not part of the spanning tree, a unique cycle is formed by combining \( e \) with the unique path in \( T(G) \) that connects the endpoints of \( e \). Each such cycle is called a fundamental cycle, and the collection of all such cycles constitutes the fundamental cycle basis.

Since a spanning tree on a graph with \( V \) vertices contains exactly \( V - 1 \) edges, the number of non-tree edges is \( E - V + 1 \). Therefore, the dimension of the cycle space spanned by a fundamental cycle basis is:
\[
\dim(\text{FCB}(G)) = E - V + 1.
\]

However, not every cycle basis is necessarily a fundamental cycle basis. Any set of linearly independent $E-V+1$ cycles that span the cycle space is a valid cycle basis of the graph. Since we are looking for a specific set of cycles, our choice is highly constrained by the hyperbolic geometry of the underlying lattice. To accommodate these constraints, the algorithm constructs a cycle basis that is not a fundamental cycle basis, but instead consists of cycles more naturally aligned with the geometry of the hyperbolic lattice.

Before presenting our algorithm for computing all plaquettes and logical operators of a given HQECC, we first prove that the set of plaquettes, excluding one plaquette, along with the set of logical operators forms a valid cycle basis for the underlying graph.

\vspace{12pt}  
\begin{thm}
Let $G_{PBC}$ be a graph representing a finite hyperbolic $\{p,q\}$ lattice embedded in a closed Riemann surface $M$ of genus $g \geq 2$. Suppose that $G_{PBC}$ has $V$ vertices, $E$ edges, and $F$ faces. Then a valid cycle basis for $G_{PBC}$ is given by the union of:
\begin{itemize}
    \item $F-1$ contractible cycles of length $p$, each corresponding to a plaquette of the hyperbolic lattice, and
    \item $2g$ non-contractible cycles each of length $l>p$ that generate the first homology group $H_1(M)$.
\end{itemize}
We denote this cycle basis as the \emph{Hyperbolic Cycle Basis (HCB)}.
\end{thm}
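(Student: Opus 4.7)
The plan is to prove the theorem in two main steps. First, use Euler's formula to verify that the proposed collection has the correct cardinality, matching $\dim(Z_1) = E - V + 1$ from equation (33). Since $M$ is a closed orientable surface of genus $g$, one has $\chi(M) = V - E + F = 2 - 2g$, so
\begin{equation*}
(F - 1) + 2g = E - V + 1,
\end{equation*}
as required. This reduces the problem to proving that the $F-1$ plaquettes together with the $2g$ non-contractible cycles are linearly independent over $\mathbb{Z}_2$, since any linearly independent subset of $Z_1$ of size $\dim(Z_1)$ is automatically a basis.

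For the independence argument, I would invoke two observations. First, the $F$ plaquettes span the boundary space $B_1 = \mathrm{im}(\partial_2)$, and they satisfy a single global $\mathbb{Z}_2$-relation, namely that their sum is zero. This is because every edge of a $\{p,q\}$ tessellation of a closed orientable surface lies in exactly two faces, so each edge is cancelled when all face boundaries are summed in $\mathbb{Z}_2$. Hence $\dim(B_1) = F - 1$, and any $F-1$ of the $F$ plaquettes form a basis of $B_1$. Second, the $2g$ non-contractible cycles are, by hypothesis, generators of $H_1(M) = Z_1 / B_1$, and since $\dim(H_1(M)) = 2g$, their images in the quotient form a basis of $H_1(M)$.

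Now suppose a $\mathbb{Z}_2$-linear combination of the chosen $F-1$ plaquettes and the $2g$ non-contractible cycles equals zero in $Z_1$. Reducing modulo $B_1$ annihilates the plaquette contributions, leaving a relation among the $2g$ homology generators inside $H_1(M)$; independence in $H_1(M)$ forces all those coefficients to vanish. What remains is a $\mathbb{Z}_2$-relation among $F-1$ plaquettes viewed inside $B_1$, and by the first observation the plaquette coefficients must vanish as well. Combined with the dimension count, this shows the proposed set is a basis of $Z_1$.

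The principal obstacle is establishing the first observation cleanly, namely that the plaquettes satisfy exactly one global $\mathbb{Z}_2$-relation. This relies on verifying that every edge of $G_{PBC}$ is shared by precisely two faces, which in turn follows from the construction in Section III: each copy $T_j U$ of the unit cell is glued to its neighbours via the side-pairing transformations induced by $\Gamma / \Gamma_{PBC}$, so that after imposing PBCs the resulting complex is a closed orientable surface with a locally two-to-one edge-face incidence. Once this incidence property is granted, both the computation $\dim(B_1) = F-1$ and the quotient argument above go through formally, and the theorem follows.
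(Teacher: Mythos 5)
Your proof is correct and rests on the same three ingredients as the paper's: the dimension count $(F-1)+2g = E-V+1$ via the Euler characteristic, the single $\mathbb{Z}_2$-relation among plaquettes coming from the two-to-one edge--face incidence on a closed surface, and the fact that the $2g$ non-contractible cycles represent a basis of $H_1(M)\cong Z_1/B_1$. The only structural difference is that you establish linear independence and let the cardinality count deliver spanning, whereas the paper argues spanning directly (every cycle is a sum of plaquettes or of homology generators) and then checks the cardinality; given the dimension count these routes are equivalent, and your quotient-by-$B_1$ independence argument is, if anything, stated slightly more carefully than the paper's spanning step.
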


\begin{proof}

For the first part of the proof, we proceed by contradiction. Suppose that HCB is not a valid cycle basis for $G_{PBC}$, then there is a cycle $C \in Z_1$ that does not belong to the vector space spanned by the elements of HCB. Based on the discussion in Section IV , there are two types of cycles in the graph $G_{PBC}$, trivial cycles that are elements of $B_1$ and non-trivial cycles that are elements of $ Z_1 \setminus B_1$. The trivial cycles are either plaquettes or product of plaquettes. Since each edge is contained in exactly two plaquettes, the sum of all plaquette cycles (mod 2) is the identity. Hence, only a set of $F-1$ plaquette cycles are linearly independent, and the sum of these plaquette cycles equals the remaining plaquette cycle in the graph. Therefore, any set $\{Pl_1,Pl_2,...,Pl_{F-1}\}$ of $F-1$ plaquette cycles spans the space $B_1$. On the other hand, the vector space of non-trivial cycles is spanned by a set of $2g$ linearly independent cycles $\{h_1,h_2,...,h_{2g}\}$ that generate the first homology group of the underlying Riemann surface $H_1(M)$ and can not be expressed as a sum of plaquettes. To ensure linear independence, we require that these cycles are pairwise orthogonal. That is, $|h_j \cap h_k| \equiv 0 \pmod{2}$, for all $j,k \in \{1,...,2g\}$. 

Now, consider any cycle $C$ in the graph $G_{PBC}$. If $C \in B_1$, then $C$ can be expressed as 
\begin{equation}
C = \sum_{j=1}^{F-1} \alpha_j Pl_j,
\end{equation}
where $\alpha_j \in \mathbb{Z}_2$.
On the other hand, if $C \in Z_1 \setminus B_1$, then $C$ can be expressed as
\begin{equation}
C = \sum_{j=1}^{2g} \beta_j h_j,
\end{equation}
where $\beta \in \mathbb{Z}_2$. Thus, $C$ belongs to the vector space spanned by the elements of the $HCB$. In other words,
$$HCB = \{Pl_1,Pl_2,...,Pl_{F-1}, h_1,h_2,...,h_{2g}\}$$
is a valid cycle basis for $G_{PBC}$ 

To complete the proof, we need to show that the set  $HCB$ has the correct dimension of a cycle basis, that is 
\begin{equation}
    \dim(HCB)=E-V+1.
\end{equation} 
Combining (8) and (9),
\begin{equation}
\begin{aligned}
    2g &= 2 - \chi(M), \\
    &= 2- F + E - V.
\end{aligned}
\end{equation}
Therefore, the dimension of the Hyperbolic Cycle Basis is

\begin{equation*}
\begin{aligned}
    \dim(HCB) &= 2g + (F - 1), \\
              &= 2 - F + E - V + (F - 1), \\
              &= E-V+1. \qedhere
\end{aligned}
\end{equation*}
\end{proof}

Having established that the Hyperbolic Cycle Basis forms a valid cycle basis for the graph $G_{PBC}$, we now provide additional context before presenting Algorithm 3, which computes the HCB. The computational complexity of Algorithm 3 is analyzed separately in Appendix A.

The graph \( G \), constructed by replicating the unit cell of the underlying Bravais lattice, contains only trivial cycles—those belonging to the boundary subgroup \( B_1 \). Thus, any cycle of length \( p \) in \( G \) corresponds to a valid plaquette. Upon imposing PBCs, the resulting graph \( G_{PBC} \) contains \( 2E/p \) plaquettes, along with additional non-trivial cycles some of which might be of length $p$. Therefore, the plaquettes introduced as a result of imposing PBCs can be identified as those cycles of length $p$ that intersect all other plaquettes in at most one edge. Moreover, there are \(2g\) independent, mutually commuting Pauli-\(Z\) logical operators and \(2g\) mutually commuting Pauli-\(X\) logical operators, each associated with a distinct logical qubit. Importantly, these logical operators come in anti-commuting pairs, satisfying \( X_j Z_j = -Z_j X_j \) for all \( j = 1, \dots, 2g \). In this construction, Pauli-\(Z\) logical operators are represented by non-trivial cycles in the primal graph \(G_{PBC} \), whereas Pauli-\(X\) logical operators correspond to non-trivial cycles in the dual graph \( G^*_{PBC} \), which translate to non-trivial cocycles in \( G_{PBC} \). Identifying these non-trivial cycles requires more than selecting cycles of length greater than \( p \), as many such cycles are trivial, belong to $B_1$, and can be formed as products of plaquettes. Therefore, it is necessary to search specifically for cycles of length greater than \( p \) that can not be obtained as products of plaquettes.

Identifying higher-order trivial cycle can be done iteratively as follows. For every two plaquettes $F_1$ and $F_2$ that are neighbours, the higher-order trivial cycle is given by their symmetric difference. We call these cycles two-fold trivial cycles. The number of two-fold trivial cycles in a lattice equals the number of edges in this lattice and they all have length $2p-2$. Now consider a two-fold trivial cycle $C_p$. This cycle intersects some plaquettes in the lattice in at most two edges. For every plaquette intersecting this cycle, their product is a three-fold trivial cycle that has length $3p-2(i+1)$ where $i \in \{1,2\}$ is the length of the intersection between $C_p$ and the given plaquette. This process can be used iteratively to construct $n$-fold trivial cycles. An $n$-fold trivial cycle is given by the symmetric difference between an $(n-1)$-fold trivial cycle and an intersecting plaquette. Its length belongs to $np - 2(n-1) -2k$, where $k \in \{0,...,n-2\}$. In practice, it is typically sufficient to compute trivial cycles involving up to three-fold intersections, as these cycles already exhibit relatively large lengths, specifically $3p - 4$ and $3p - 6$. Such large lengths are generally adequate to identify all non-trivial cycles present in the lattice.

Fig.~\ref{fig:logicals} illustrates examples of logical operators obtained using the HCB algorithm for the \( \{8,3\} \) lattice with \( N = 9 \). It is important to note that, whereas every cycle necessarily includes at least one edge introduced by the imposition of PBCs, the same requirement does not apply to cocycles.

\begin{algorithm}[http]
\caption{Hyperbolic Cycle Basis Algorithm}
\KwIn{
\begin{itemize}
    \item Graph $G = (V, E')$ of the finite hyperbolic lattice before imposing PBCs.
    \item Graph $G_{PBC} = (V, E)$ of the lattice after imposing PBCs.
    \item Number of plaquettes $N_F = 2E/p$ in $G_{PBC}$.
    \item Genus of the embedding Riemann surface $g$.
\end{itemize}
}
\KwOut{\texttt{HCB},  \texttt{all\_plaquettes}, \texttt{Z\_logicals}, \texttt{X\_logicals}.}

\textbf{Step 1: Initial Plaquette Detection} \\
Find all cycles of length $p$ in $G$ and store them in two sets: \texttt{HCB} and \texttt{all\_plaquettes}. \\
Let $N_i = \texttt{len(HCB)}$.

\vspace{0.2em}
\textbf{Step 2: Add Remaining Plaquettes from $G_{PBC}$} \\
Find all cycles of length $p$ in $G_{PBC}$ that are not already in \texttt{HCB}; call this set \texttt{RP}.

\ForEach{$C_k \in \texttt{RP}$}{
    \If{$|C_k \cap C_p| \in \{0, 1\}$ for all $C_p \in \texttt{HCB}$}{
        \If{$\mathrm{len(HCB)} < N_F - 1$}{
            Append $C_k$ to \texttt{HCB}. \\
            Append $C_k$ to \texttt{all\_plaquettes}.
        }
        \ElseIf{$\mathrm{len(HCB)} == N_F - 1$}{
            Append $C_k$ to \texttt{all\_plaquettes}. \\
            \textbf{break}
        }
    }
}

\vspace{0.2em}
\textbf{Step 3: Generate Higher-Order Trivial Cycles} \\
Generate $n$-fold trivial cycles iteratively to obtain all trivial cycles up to a given maximum length $l$. \\
Store the output in the set \texttt{trivial\_cycles}.

\vspace{0.2em}
\textbf{Step 4: Find Non-Trivial Cycles ($Z$-type Logical Operators)} \\
Find all cycles $C_k$ of length $p < x \leq l$ in $G_{PBC}$ that are not in \texttt{trivial\_cycles}; denote this set as \texttt{potential\_logicals}.

\While{$\mathrm{len(Z\_logicals)} < 2 g$}{
    \ForEach{$Z_l \in \texttt{potential\_logicals}$}{
        \If{$|Z_l \cap Z_n| \equiv 0 \pmod{2}$ for all $Z_n \in \texttt{Z\_logicals}$}{
            Append $Z_l$ to \texttt{HCB}. \\
            Append $Z_l$ to \texttt{Z\_logicals}. \\
            \textbf{break}

        }
    }
}

\vspace{0.2em}
\textbf{Step 5: Find Non-Trivial Cocycles ($X$-type Logical Operators)} \\
{\begin{itemize}
    \item Construct the dual graph $G^*_{PBC}$, corresponding to a $\{q,p\}$ lattice, by placing a vertex at the center of each plaquette in $G_{PBC}$ and connecting two vertices whenever their corresponding plaquettes share a common edge.
    \item Apply steps 1-4 to the graph $G^*_{PBC}$ to find the set of non-trivial cycles in $G^*_{PBC}$. 
    \item Convert each non-trivial cycle \( C \) in \( G^*_{PBC} \) into its corresponding cocycle in \\ \( G_{PBC} \) and denote the resulting set of cocycles by \texttt{dual\_potentials}.
\end{itemize}}

\While{$\mathrm{len(X\_logicals)} < 2 g$}{ $k \gets len(\texttt{X\_logicals})$ \\
    \ForEach{$X_l \in \texttt{dual\_potentials}$}{
        \If{ $\left( |X_l \cap X_n| \equiv 0 \pmod{2} \quad \text{for all } X_n \in \texttt{X\_logicals} \right)$ \\
            \textbf{and} $\left( |X_l \cap Z_{k+1}| \equiv 1 \pmod{2} \right)$
           }{
            Append $X_l$ to \texttt{X\_logicals}. \\
            \textbf{break}
        }
    }
}

\end{algorithm}

\begin{figure*}[ht]
    \centering
    \begin{subfigure}{0.48\linewidth}
        \centering
        \includegraphics[width=\linewidth]{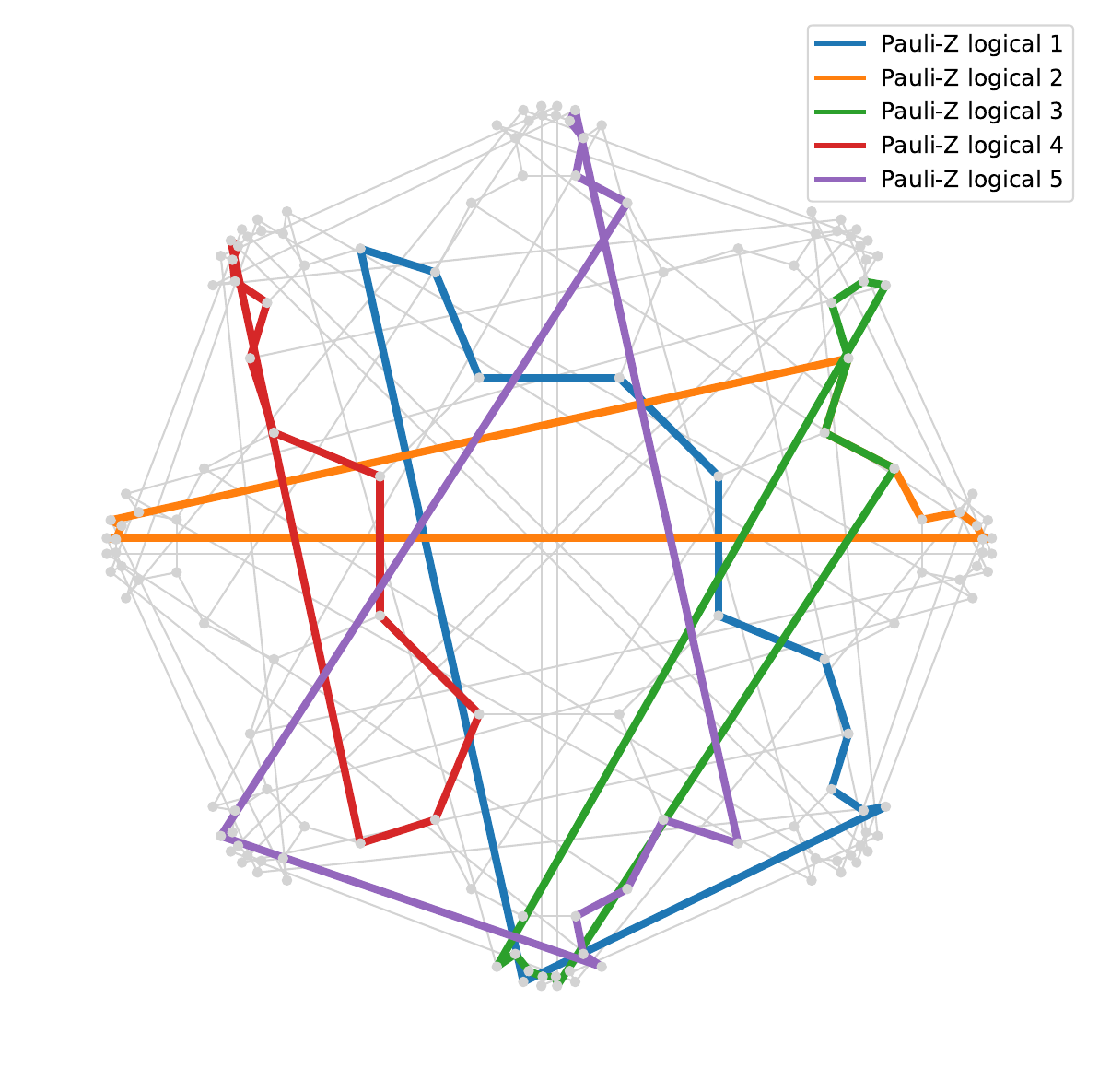}
        \caption{$Z$-type logical operators}
        \label{fig:z-logicals}
    \end{subfigure}
    \hfill
    \begin{subfigure}{0.48\linewidth}
        \centering
        \includegraphics[width=\linewidth]{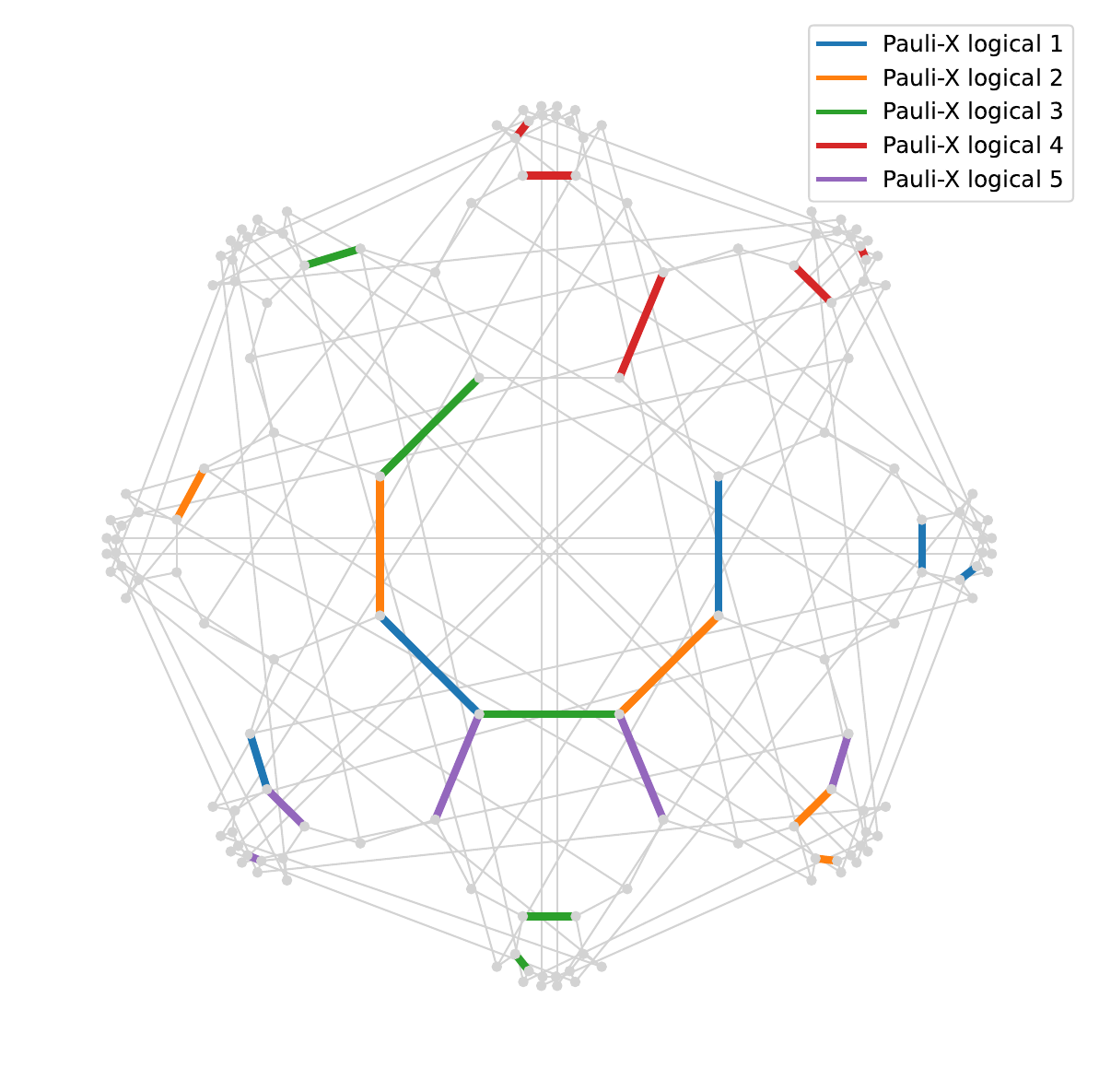}
        \caption{$X$-type logical operators}
        \label{fig:x-logicals}
    \end{subfigure}

    \caption{Logical operators are defined by canonical nontrivial cycles: $Z$-type operators correspond to nontrivial cycles on the primal graph, while $X$-type operators correspond to nontrivial cycles on the dual graph.}
    \label{fig:logicals}
\end{figure*}

Finally, we summarize our systematic framework for constructing HQECCs and computing various code parameters:

\begin{enumerate}
    \item Select a hyperbolic tessellation $\{p,q\}$ of the Poincar\'e disk with an associated Bravais lattice $\{p_B, q_B\}$.
    
    \item Apply Algorithm 1 to construct a periodic $\{p,q\}$ lattice. Let $G$ denote the graph representation of the $\{p,q\}$ lattice before imposing the PBCs, and let $G_{PBC}$ denote the corresponding graph after imposing the PBCs.
    
    \item Input $G$ and $G_{PBC}$ into Algorithm 3 to compute all plaquette cycles and determine the logical operators of the HQECC.
    
    \item Use $G_{PBC}$ obtained from Algorithm 1 and the set of logical operators obtained from Algorithm 3 as inputs into Algorithm 2 to compute the code distance $d$. 
    \item As discussed before, the number of physical qubits is the number of edges $E$ in $G_{PBC}$, and the number of logical qubits is given by $2h$, where $h$ is the genus of the underlying Riemann surface defined in (26). These are the parameters of the HQECC.

    \item Using this framework, one can also simulate a HQECC and make an estimate of the code's error threshold corresponding to a specified error model.
\end{enumerate}

\section{Numerical Simulations}

To demonstrate the utility and versatility of this framework, we apply it to simulate two representative HQECCs derived from two hyperbolic tessellations $\{8,3\}, \ \{10,3\}$ of the Poincar\'e disk, each associated with a distinct underlying Bravais lattice. The $\{8,3\}$ lattice admits an underlying $\{8,8\}$ Bravais lattice belonging to the $\{4g,4g\}$ family for $g=2$. On the other hand, the $\{10,3\}$ lattice has an underlying $\{10,5\}$ Bravais lattice belonging to the $\{2(2g+1),2g+1\}$ family for $g=2$.

For each simulation, we employ a phenomenological noise model in which each qubit experiences a Pauli-$Z$ error independently with probability $p$, and we assume that the syndrome measurements are error free. After that, we use a minimum-weight perfect matching (MWPM) decoder in order to infer the occurrence of a logical error. A logical error occurs if the product of the real and inferred errors anti-commutes with any of the $X$-type logical operators $\{X_1,...,X_{2g}\}$. The results of the simulations are depicted in Fig.~\ref{fig:threshold-combined}. An archived version of the simulation code used in this work is available at \cite{mahmoud2026hqecc_threshold}, while an actively maintained repository is hosted on GitHub at ~\cite{mahmoud_hqecc_github}.

For each HQECC, we compute the logical-error probability $p_L$ as a function of the physical error probability $p$ for several code sizes. To estimate the threshold, we examine the family of curves  and identify the parameter regime where the behavior with respect to system size inverts: for $p$ below the threshold, $p_L$ decreases with increasing code size, whereas for $p$ above the threshold, $p_L$ increases with increasing code size. The quoted threshold values $\approx3\%$ and $\approx 2\%$ for the $\{8,3\}$ and the $\{10,3\}$ HQECCs, respectively, are obtained from this criterion and should be understood as approximate estimates. These estimates are also in agreement with prior results showing that HQECCs thresholds range from $1-3\%$ when decoded with MWPM \cite{breuckmann2016constructions}. While faster decoders, such as union–find and belief propagation, have been proposed for surface codes, their decoding performance remains suboptimal compared with MWPM \cite{delfosse2021almost, huang2020fault, old2023generalized}.

The presented HQECCs possess a $Z/X$ asymmetry; therefore, the $Z$-error threshold differs from the $X$-error threshold. This contrasts with the toric code, which possesses $Z/X$ symmetry and exhibits an error threshold of approximately $10\%$ for the same noise model \cite{dennis2002topological}. 

The reduced threshold for the HQECCs compared with the toric code arises for two main reasons. First, the code distance of an HQECC grows only logarithmically with the number of physical qubits, $d=\Theta(\log n)$, in contrast to the toric code whose distance scales with the linear system size $d=\Theta(\sqrt{n})$ \cite{delfosse2013tradeoffs, breuckmann2016constructions}. This behavior reflects the tradeoff between distance and encoding rate: HQECCs achieve a constant encoding rate, whereas the toric code has vanishing rate in the thermodynamic limit. Consequently, the modest increase of the distance observed in Fig.~\ref{fig:threshold-combined} is not a peculiarity of the specific tessellations considered, but a generic feature of HQECCs, and this logarithmic distance scaling increases the logical failure probability compared with the toric code.

Second, hyperbolic tessellations correspond to closed surfaces of higher genus. Such higher-genus surfaces contain a large number of homologically nontrivial cycles, which increases the number of possible logical error paths and makes HQECCs more susceptible to logical failures. Together, these effects lead to lower thresholds for HQECCs compared with the toric code.

While our simulation present $Z$-error thresholds, the same framework can be applied to compute the $X$-error thresholds for both HQECCs. Generally, the framework naturally extends to circuit-level noise models with faulty syndrome measurements. 
In this case, repeated rounds of syndrome extraction stack the periodic hyperbolic lattice of Algorithm 1 into a space–time decoding graph, where the plaquette cycles and representatives of the logical operators derived from Algorithm 3 are defined on each temporal layer, and measurement faults introduce additional time-like edges between successive layers.

\begin{figure*}[t]
    \centering
    \includegraphics[width=0.49\linewidth]{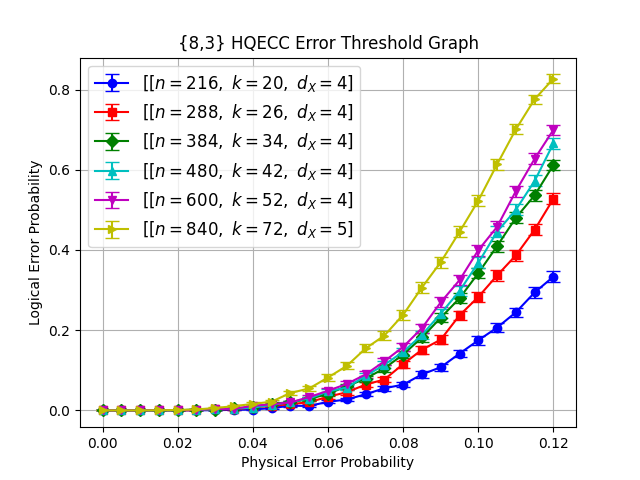}
    \includegraphics[width=0.49\linewidth]{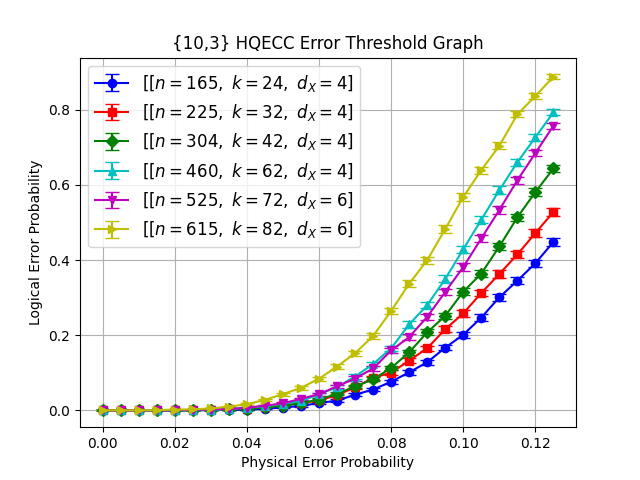}
    \caption{Estimated $Z$-error thresholds for two HQECCs with different tilings: $\{8,3\}$ with a threshold of approximately $3\%$ (left) and $\{10,3\}$ with a threshold of approximately $2\%$ (right). Each data point was obtained via $10^{4}$-trial Monte Carlo simulations under a phenomenological noise model, where each qubit experiences an independent Pauli-$Z$ error with probability $p$ per trial.}
    \label{fig:threshold-combined}
\end{figure*}

\section{Results}
In this work, we presented a systematic framework for constructing and benchmarking CSS quantum error correction codes on hyperbolic lattices. Central to this framework is the HCB algorithm, which efficiently computes all plaquette cycles and logical operators, represented by non-trivial cycles and cocycles on the hyperbolic lattice. The HCB algorithm enables a scalable and automated approach to defining the support of the code’s stabilizers and logical operators—an otherwise intractable task for large lattices if done manually. This capability is essential for benchmarking large-scale HQECCs. 

To demonstrate the capabilities of the framework, we applied it to compute the $Z$-error thresholds of two HQECCs constructed from the $\{8,3\}$ and $\{10,3\}$ hyperbolic tessellations. The simulations were performed using a phenomenological noise model in which each qubit independently experiences a Pauli-$Z$ error with probability $p$. Logical failures were identified when the error and its correction anti-commute with an $X$-type logical operator, and thresholds were estimated using Monte Carlo sampling over $10^{4}$ trials per data point.

Although we focused on HQECCs as representative examples, the framework can be adapted, with appropriate modifications, to benchmark other CSS codes defined on hyperbolic lattices. In particular, it extends naturally to codes with time‑dependent stabilizer measurements, such as CSS Floquet codes \cite{davydova2023floquet, kesselring2024anyon}. Floquet codes are defined by a periodic measurement schedule in which the instantaneous stabilizer group, and hence the chosen representatives of the logical operators, evolve in time. These codes can be implemented on 3‑colorable hyperbolic lattices. Algorithm 1 can be used to construct regular hyperbolic lattices of the form $\{p,3\}$ with PBCs. For suitable choices of $p$, these lattices are 3‑colorable, and therefore compatible with such Floquet measurement schedules \cite{soares2018hyperbolic}. Although the measured stabilizers and explicit logical representatives evolve during the Floquet cycle, they are associated with canonical plaquette cycles and nontrivial homology classes of the underlying lattice, respectively, which can be obtained from Algorithm 3 \cite{davydova2023floquet, higgott2024constructions}. Over one period, errors correspond to paths in the space–time decoding graph, and a logical failure occurs when an error path forms a nontrivial cycle that anti-commutes with a logical operator. Algorithm 2, which finds the shortest nontrivial spatial cycle with this property, thus provides a geometric baseline for estimating the code distance in the Floquet setting, with the precise distance depending on the full space–time decoding graph and the underlying noise model.

This work serves as a first step toward a systematic study of how lattice geometry influences the performance of CSS quantum error correction codes on hyperbolic lattices, a key question for optimizing future fault-tolerant quantum architectures. A companion study applying this framework to benchmark CSS Floquet codes on hyperbolic lattices will be presented elsewhere.

\begin{acknowledgments}
We thank the anonymous referees for their careful reading of the manuscript and for their constructive and valuable feedback, which significantly improved the clarity and quality of this work. We thank Joseph Maciejko for fruitful discussions around the use of GAP. We would also like to thank Srinivas Vamsi Parasa for helping in optimizing and parallelizing the Python code used to simulate the HQECCs. We acknowledge the use of Qiskit's AerSimulator for executing the HQECC circuits. SR has been partially supported by a Natural Sciences and Engineering Research Council of Canada (NSERC) Discovery Grant and a Canadian Tri-Agency New Frontiers in Research Fund (NFRF) Exploration Grant. AAM was supported during this work by the Discovery Grant of SR. 
\end{acknowledgments}

\appendix
\section{HCB Algorithm Complexity Analysis}

In this appendix we analyze the computational complexity of the HCB algorithm. Throughout, we consider the graph of a regular $\{p,q\}$ hyperbolic lattice. For such lattices, the number of vertices $V$, edges $E$, and faces $F$ scale linearly with one another according to (7), i.e.,
\begin{equation}
F = \Theta(E) = \Theta(V).
\end{equation}
All complexity estimates can, therefore, be expressed in terms of $E$. The HCB algorithm restricts attention to cycles of length at most $l$, where $l$ is an algorithmic cutoff parameter.

\subsection*{Step 1: Initial Plaquette Detection}

In Step 1, the algorithm identifies all cycles of length $p$ in the graph $G$ prior to imposing PBCs. Since $G$ admits a planar embedding without nontrivial cycles, every cycle of length $p$ corresponds to the boundary of a plaquette. Therefore, the number of length $p$ cycles in $G$ is $\mathcal{O}(F)$.

Enumerating all simple cycles in a graph is a classical problem, with Johnson's algorithm being its classical solution \cite{johnson1975finding}. For undirected graphs, an asymptotically optimal algorithm exists \cite{birmele2013optimal}. Here we use an implementation of Gupta's and Suzumura's algorithm to enumerate cycles of length up to $l$; this algorithm runs in $\mathcal{O}((\eta+n)(l-1)D^l)$, where $\eta$ is the number of cycles enumerated, $n$ is the number of vertices and $D$ is the average degree of the graph \cite{gupta2021finding}. For a $\{p,q\}$ regular hyperbolic lattice, $D=q$ and $n=V$. Moreover, in this step, $l=p$; therefore, Step 1 runs in
\begin{equation}
    T_1 = \mathcal{O}(F+V) = \mathcal{O}(E)
\end{equation}

\subsection*{Preprocessing: Detecting Length-$l$ Cycles}

For steps 2 and 4, we need to detect all cycles up to a given length $l$. Cycles of length $p$ will constitute the remaining plaquettes needed for Step 2, while higher-order cycles will be utilized to detect non-trivial cycles that represent $Z$-type logical operators. 
In practice, $l$ is initialized to a modest constant $l>p$ and increased iteratively until $2g$ independent nontrivial cycles are obtained in Step 4. The cost of finding all cycles up to a given length $l$ is
\begin{equation}
    T_{l-cycles} = \mathcal{O}((\eta+V)(l-1) q^l),
\end{equation}
where $\eta$ is the total number of cycles found.

\subsection*{Step 2: Adding Remaining Plaquettes from $G_{PBC}$}

In Step 2, the algorithm identifies any remaining plaquette cycles in $G_{\mathrm{PBC}}$ and augments the HCB set while enforcing the required intersection constraints. The number of candidate plaquettes is $\mathcal{O}(F)$, and the size of the HCB set grows to $\mathcal{O}(F)$.

The dominant cost in this step arises from checking pairwise intersections between candidate plaquettes and elements of the HCB set. Since each plaquette contains $p$ edges, each intersection test takes $\mathcal{O}(1)$ time. In the worst case, the nested loop performs $\mathcal{O}(F^2)$ such checks, yielding a total runtime
\begin{equation}
T_2 = \mathcal{O}(F^2) = \mathcal{O}(E^2).
\end{equation}

\subsection*{Step 3: Generating Higher-order Trivial Cycles}

In Step 3, the algorithm constructs higher-order trivial cycles by taking symmetric differences of plaquette boundaries. An $n$-fold trivial cycle corresponds to the boundary of a connected cluster of $n$ adjacent plaquettes. 

The construction of two-fold trivial cycles examines all face pairs. Since each face has constant size $p$, each intersection test costs $\mathcal{O}(1)$. Therefore, the two-fold stage runs in $\mathcal{O}(F^2)=\mathcal{O}(E^2)$.

For $n \geq 3$, the algorithm iterates over all previously generated $(n-1)$-fold cycles and checks each cycle with all $F$ plaquettes. If the $(n-1)$-fold cycle and the plaquette are adjacent, the symmetric difference between them is taken to generate an $n$-fold cycle. For each candidate pair, the symmetric difference between a cycle of size at most $l$ and a plaquette of constant size $p$ costs $\mathcal{O}(l)$. Consequently, the runtime for fixed $n$ is
\begin{equation}
    T_n = \mathcal{O}(F N_{n-1} l),
\end{equation}
where $N_{n-1}$ is the number of $(n-1)$-fold trivial cycles produced in the previous step. Summing over $n=3,...,n_{max}$, we get
\begin{equation}
    T_{n-cycles} = \mathcal{O}\left(\sum_{n=3}^{n_{max}} F N_{n-1}  l\right).
\end{equation}
Let $N_{tot}$ be the total number of trivial cycles generated, then
\begin{equation}
    T_3 = \mathcal{O}(E^2+EN_{tot}l).
\end{equation}

\subsection*{Step 4: Identifying $Z$-type Logical Operators}

In Step 4, the algorithm identifies nontrivial cycles representing $Z$-type logical operators by testing candidate cycles against previously selected logical operators using a parity-of-intersection criterion.

Let $\gamma$ denote the number of candidate cycles considered in this step. Each candidate cycle has length $\mathcal{O}(l)$, and the number of logical operators selected is $2g$. By (14), $g =cF+1$, where $c>0$ is a constant depending on $\{p,q\}$. Therefore, $g=\Theta(F)= \Theta(E)$. For each candidate, the algorithm checks its intersection parity with all previously selected logical operators. Each such check takes $\mathcal{O}(l)$ time; therefore, the total runtime of Step 4 is
\begin{equation}
T_4 = \mathcal{O}(\gamma g l) = \mathcal{O}(\gamma E l)
\end{equation}

\subsection*{Step 5: Identifying $X$-type Logical Operators}
Step 5 constructs $X$-type logical operators by working on the dual lattice. This step consists of two parts.

First, the algorithm constructs the dual graph $G^*_{\mathrm{PBC}}$ corresponding to the $\{q,p\}$ hyperbolic tiling. By (7), the number of edges in the primal and the dual graph is the same, while the number of faces and vertices are interchanged $F \leftrightarrow V$ in the two graphs.
We start by placing vertices in the center of each plaquette and then connect vertices corresponding to adjacent faces. The double loop to connect vertices takes $\mathcal{O}(F^2)$ time.

Second, Steps 1-4 of the HCB algorithm are applied to the dual graph $G^*_{\mathrm{PBC}}$ in order to identify nontrivial cycles in the dual lattice. Because $G^*_{\mathrm{PBC}}$ is again a bounded-degree hyperbolic tiling with constant face size and the number of edges in the primal and the dual graphs is the same, the same complexity bounds derived above apply. 

\subsection*{Summary}

The runtime complexity of the HCB algorithm is dominated by three components: the enumeration of cycles up to a prescribed cutoff length $l$ (A3) and the computation of intersections used to construct higher-order trivial cycles (A7) and identify logical operators (A8). These three components are executed separately for the primal and dual graphs.

To complement the asymptotic analysis, we measured the runtime required to generate the HCB graph for the $\{8,3\}$ tiling as a function of $E$. The results are shown in Fig.~\ref{fig:HCB_scaling}. For the system sizes accessible in this work $(E < 1000)$, the construction completes within a few seconds on a single CPU core. The figure serves to document the observed computational runtime for the instances considered here and is not intended to represent the asymptotic scaling behavior of the algorithm.

\begin{figure}[]
\centering
\includegraphics[width=0.9\columnwidth]{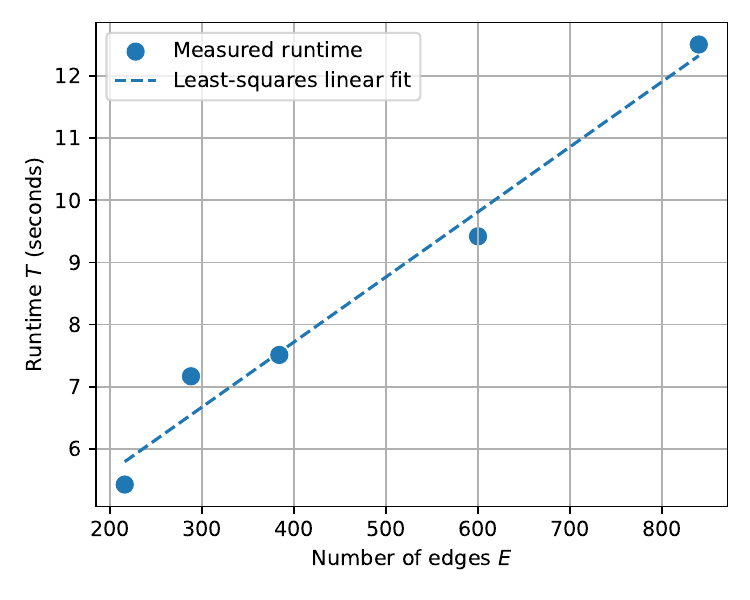}
\caption{
Measured runtime required to execute the HCB algorithm for five $\{8,3\}$ tilings considered in this work as a function of the number of edges $E$. The cycle–enumeration cutoff was set to $l=25$ for the primal graph and $l=10$ for the dual graph. The dashed line shows a least-squares linear fit to the data.
}
\label{fig:HCB_scaling}
\end{figure}

\bibliography{HQECCs}

\begin{thebibliography}{48}%
\makeatletter
\providecommand \@ifxundefined [1]{%
 \@ifx{#1\undefined}
}%
\providecommand \@ifnum [1]{%
 \ifnum #1\expandafter \@firstoftwo
 \else \expandafter \@secondoftwo
 \fi
}%
\providecommand \@ifx [1]{%
 \ifx #1\expandafter \@firstoftwo
 \else \expandafter \@secondoftwo
 \fi
}%
\providecommand \natexlab [1]{#1}%
\providecommand \enquote  [1]{``#1''}%
\providecommand \bibnamefont  [1]{#1}%
\providecommand \bibfnamefont [1]{#1}%
\providecommand \citenamefont [1]{#1}%
\providecommand \href@noop [0]{\@secondoftwo}%
\providecommand \href [0]{\begingroup \@sanitize@url \@href}%
\providecommand \@href[1]{\@@startlink{#1}\@@href}%
\providecommand \@@href[1]{\endgroup#1\@@endlink}%
\providecommand \@sanitize@url [0]{\catcode `\\12\catcode `\$12\catcode `\&12\catcode `\#12\catcode `\^12\catcode `\_12\catcode `\%12\relax}%
\providecommand \@@startlink[1]{}%
\providecommand \@@endlink[0]{}%
\providecommand \url  [0]{\begingroup\@sanitize@url \@url }%
\providecommand \@url [1]{\endgroup\@href {#1}{\urlprefix }}%
\providecommand \urlprefix  [0]{URL }%
\providecommand \Eprint [0]{\href }%
\providecommand \doibase [0]{https://doi.org/}%
\providecommand \selectlanguage [0]{\@gobble}%
\providecommand \bibinfo  [0]{\@secondoftwo}%
\providecommand \bibfield  [0]{\@secondoftwo}%
\providecommand \translation [1]{[#1]}%
\providecommand \BibitemOpen [0]{}%
\providecommand \bibitemStop [0]{}%
\providecommand \bibitemNoStop [0]{.\EOS\space}%
\providecommand \EOS [0]{\spacefactor3000\relax}%
\providecommand \BibitemShut  [1]{\csname bibitem#1\endcsname}%
\let\auto@bib@innerbib\@empty
\bibitem [{\citenamefont {Gottesman}(2009)}]{gottesman2009introduction}%
  \BibitemOpen
  \bibfield  {author} {\bibinfo {author} {\bibfnamefont {D.}~\bibnamefont {Gottesman}},\ }\bibfield  {title} {\bibinfo {title} {An introduction to quantum error correction and fault-tolerant quantum computation},\ }\href@noop {} {\bibfield  {journal} {\bibinfo  {journal} {arXiv preprint arXiv:0904.2557}\ } (\bibinfo {year} {2009})}\BibitemShut {NoStop}%
\bibitem [{\citenamefont {Terhal}(2015)}]{terhal2015quantum}%
  \BibitemOpen
  \bibfield  {author} {\bibinfo {author} {\bibfnamefont {B.~M.}\ \bibnamefont {Terhal}},\ }\bibfield  {title} {\bibinfo {title} {Quantum error correction for quantum memories},\ }\href@noop {} {\bibfield  {journal} {\bibinfo  {journal} {Reviews of Modern Physics}\ }\textbf {\bibinfo {volume} {87}},\ \bibinfo {pages} {307} (\bibinfo {year} {2015})}\BibitemShut {NoStop}%
\bibitem [{\citenamefont {Breuckmann}\ and\ \citenamefont {Eberhardt}(2021)}]{breuckmann2021quantum}%
  \BibitemOpen
  \bibfield  {author} {\bibinfo {author} {\bibfnamefont {N.~P.}\ \bibnamefont {Breuckmann}}\ and\ \bibinfo {author} {\bibfnamefont {J.~N.}\ \bibnamefont {Eberhardt}},\ }\bibfield  {title} {\bibinfo {title} {Quantum low-density parity-check codes},\ }\href@noop {} {\bibfield  {journal} {\bibinfo  {journal} {PRX quantum}\ }\textbf {\bibinfo {volume} {2}},\ \bibinfo {pages} {040101} (\bibinfo {year} {2021})}\BibitemShut {NoStop}%
\bibitem [{\citenamefont {Kitaev}(2003)}]{kitaev2003fault}%
  \BibitemOpen
  \bibfield  {author} {\bibinfo {author} {\bibfnamefont {A.~Y.}\ \bibnamefont {Kitaev}},\ }\bibfield  {title} {\bibinfo {title} {Fault-tolerant quantum computation by anyons},\ }\href@noop {} {\bibfield  {journal} {\bibinfo  {journal} {Annals of Physics}\ }\textbf {\bibinfo {volume} {303}},\ \bibinfo {pages} {2} (\bibinfo {year} {2003})}\BibitemShut {NoStop}%
\bibitem [{\citenamefont {Dennis}\ \emph {et~al.}(2002)\citenamefont {Dennis}, \citenamefont {Kitaev}, \citenamefont {Landahl},\ and\ \citenamefont {Preskill}}]{dennis2002topological}%
  \BibitemOpen
  \bibfield  {author} {\bibinfo {author} {\bibfnamefont {E.}~\bibnamefont {Dennis}}, \bibinfo {author} {\bibfnamefont {A.}~\bibnamefont {Kitaev}}, \bibinfo {author} {\bibfnamefont {A.}~\bibnamefont {Landahl}},\ and\ \bibinfo {author} {\bibfnamefont {J.}~\bibnamefont {Preskill}},\ }\bibfield  {title} {\bibinfo {title} {Topological quantum memory},\ }\href@noop {} {\bibfield  {journal} {\bibinfo  {journal} {Journal of Mathematical Physics}\ }\textbf {\bibinfo {volume} {43}},\ \bibinfo {pages} {4452} (\bibinfo {year} {2002})}\BibitemShut {NoStop}%
\bibitem [{\citenamefont {Breuckmann}\ and\ \citenamefont {Terhal}(2016)}]{breuckmann2016constructions}%
  \BibitemOpen
  \bibfield  {author} {\bibinfo {author} {\bibfnamefont {N.~P.}\ \bibnamefont {Breuckmann}}\ and\ \bibinfo {author} {\bibfnamefont {B.~M.}\ \bibnamefont {Terhal}},\ }\bibfield  {title} {\bibinfo {title} {Constructions and noise threshold of hyperbolic surface codes},\ }\href@noop {} {\bibfield  {journal} {\bibinfo  {journal} {IEEE Transactions on Information Theory}\ }\textbf {\bibinfo {volume} {62}},\ \bibinfo {pages} {3731} (\bibinfo {year} {2016})}\BibitemShut {NoStop}%
\bibitem [{\citenamefont {Breuckmann}\ \emph {et~al.}(2017)\citenamefont {Breuckmann}, \citenamefont {Vuillot}, \citenamefont {Campbell}, \citenamefont {Krishna},\ and\ \citenamefont {Terhal}}]{breuckmann2017hyperbolic}%
  \BibitemOpen
  \bibfield  {author} {\bibinfo {author} {\bibfnamefont {N.~P.}\ \bibnamefont {Breuckmann}}, \bibinfo {author} {\bibfnamefont {C.}~\bibnamefont {Vuillot}}, \bibinfo {author} {\bibfnamefont {E.}~\bibnamefont {Campbell}}, \bibinfo {author} {\bibfnamefont {A.}~\bibnamefont {Krishna}},\ and\ \bibinfo {author} {\bibfnamefont {B.~M.}\ \bibnamefont {Terhal}},\ }\bibfield  {title} {\bibinfo {title} {Hyperbolic and semi-hyperbolic surface codes for quantum storage},\ }\href@noop {} {\bibfield  {journal} {\bibinfo  {journal} {arXiv preprint arXiv:1703.00590}\ } (\bibinfo {year} {2017})}\BibitemShut {NoStop}%
\bibitem [{\citenamefont {Albuquerque}\ \emph {et~al.}(2009)\citenamefont {Albuquerque}, \citenamefont {Palazzo},\ and\ \citenamefont {Silva}}]{albuquerque2009topological}%
  \BibitemOpen
  \bibfield  {author} {\bibinfo {author} {\bibfnamefont {C.}~\bibnamefont {Albuquerque}}, \bibinfo {author} {\bibfnamefont {R.}~\bibnamefont {Palazzo}},\ and\ \bibinfo {author} {\bibfnamefont {E.}~\bibnamefont {Silva}},\ }\bibfield  {title} {\bibinfo {title} {Topological quantum codes on compact surfaces with genus $g\geq2$},\ }\href@noop {} {\bibfield  {journal} {\bibinfo  {journal} {Journal of mathematical physics}\ }\textbf {\bibinfo {volume} {50}} (\bibinfo {year} {2009})}\BibitemShut {NoStop}%
\bibitem [{\citenamefont {Kim}(2007)}]{kim2007quantum}%
  \BibitemOpen
  \bibfield  {author} {\bibinfo {author} {\bibfnamefont {I.~H.}\ \bibnamefont {Kim}},\ }\emph {\bibinfo {title} {Quantum codes on Hurwitz surfaces}},\ \href@noop {} {Ph.D. thesis},\ \bibinfo  {school} {Massachusetts Institute of Technology} (\bibinfo {year} {2007})\BibitemShut {NoStop}%
\bibitem [{\citenamefont {Hastings}\ and\ \citenamefont {Haah}(2021)}]{hastings2021dynamically}%
  \BibitemOpen
  \bibfield  {author} {\bibinfo {author} {\bibfnamefont {M.~B.}\ \bibnamefont {Hastings}}\ and\ \bibinfo {author} {\bibfnamefont {J.}~\bibnamefont {Haah}},\ }\bibfield  {title} {\bibinfo {title} {Dynamically generated logical qubits},\ }\href@noop {} {\bibfield  {journal} {\bibinfo  {journal} {Quantum}\ }\textbf {\bibinfo {volume} {5}},\ \bibinfo {pages} {564} (\bibinfo {year} {2021})}\BibitemShut {NoStop}%
\bibitem [{\citenamefont {Gidney}\ \emph {et~al.}(2021)\citenamefont {Gidney}, \citenamefont {Newman}, \citenamefont {Fowler},\ and\ \citenamefont {Broughton}}]{gidney2021honeycomb}%
  \BibitemOpen
  \bibfield  {author} {\bibinfo {author} {\bibfnamefont {C.}~\bibnamefont {Gidney}}, \bibinfo {author} {\bibfnamefont {M.}~\bibnamefont {Newman}}, \bibinfo {author} {\bibfnamefont {A.}~\bibnamefont {Fowler}},\ and\ \bibinfo {author} {\bibfnamefont {M.}~\bibnamefont {Broughton}},\ }\bibfield  {title} {\bibinfo {title} {A fault-tolerant honeycomb memory},\ }\href@noop {} {\bibfield  {journal} {\bibinfo  {journal} {Quantum}\ }\textbf {\bibinfo {volume} {5}},\ \bibinfo {pages} {605} (\bibinfo {year} {2021})}\BibitemShut {NoStop}%
\bibitem [{\citenamefont {Higgott}\ and\ \citenamefont {Breuckmann}(2024)}]{higgott2024constructions}%
  \BibitemOpen
  \bibfield  {author} {\bibinfo {author} {\bibfnamefont {O.}~\bibnamefont {Higgott}}\ and\ \bibinfo {author} {\bibfnamefont {N.~P.}\ \bibnamefont {Breuckmann}},\ }\bibfield  {title} {\bibinfo {title} {Constructions and performance of hyperbolic and semi-hyperbolic floquet codes},\ }\href@noop {} {\bibfield  {journal} {\bibinfo  {journal} {PRX Quantum}\ }\textbf {\bibinfo {volume} {5}},\ \bibinfo {pages} {040327} (\bibinfo {year} {2024})}\BibitemShut {NoStop}%
\bibitem [{\citenamefont {Ratcliffe}(2006)}]{ratcliffe2006foundations}%
  \BibitemOpen
  \bibfield  {author} {\bibinfo {author} {\bibfnamefont {J.~G.}\ \bibnamefont {Ratcliffe}},\ }\href@noop {} {\emph {\bibinfo {title} {Foundations of hyperbolic manifolds}}}\ (\bibinfo  {publisher} {Springer},\ \bibinfo {year} {2006})\BibitemShut {NoStop}%
\bibitem [{\citenamefont {Boettcher}\ \emph {et~al.}(2022)\citenamefont {Boettcher}, \citenamefont {Gorshkov}, \citenamefont {Koll{\'a}r}, \citenamefont {Maciejko}, \citenamefont {Rayan},\ and\ \citenamefont {Thomale}}]{boettcher2022crystallography}%
  \BibitemOpen
  \bibfield  {author} {\bibinfo {author} {\bibfnamefont {I.}~\bibnamefont {Boettcher}}, \bibinfo {author} {\bibfnamefont {A.~V.}\ \bibnamefont {Gorshkov}}, \bibinfo {author} {\bibfnamefont {A.~J.}\ \bibnamefont {Koll{\'a}r}}, \bibinfo {author} {\bibfnamefont {J.}~\bibnamefont {Maciejko}}, \bibinfo {author} {\bibfnamefont {S.}~\bibnamefont {Rayan}},\ and\ \bibinfo {author} {\bibfnamefont {R.}~\bibnamefont {Thomale}},\ }\bibfield  {title} {\bibinfo {title} {Crystallography of hyperbolic lattices},\ }\href@noop {} {\bibfield  {journal} {\bibinfo  {journal} {Physical Review B}\ }\textbf {\bibinfo {volume} {105}},\ \bibinfo {pages} {125118} (\bibinfo {year} {2022})}\BibitemShut {NoStop}%
\bibitem [{\citenamefont {Koll{\'a}r}\ \emph {et~al.}(2019)\citenamefont {Koll{\'a}r}, \citenamefont {Fitzpatrick},\ and\ \citenamefont {Houck}}]{kollar2019hyperbolic}%
  \BibitemOpen
  \bibfield  {author} {\bibinfo {author} {\bibfnamefont {A.~J.}\ \bibnamefont {Koll{\'a}r}}, \bibinfo {author} {\bibfnamefont {M.}~\bibnamefont {Fitzpatrick}},\ and\ \bibinfo {author} {\bibfnamefont {A.~A.}\ \bibnamefont {Houck}},\ }\bibfield  {title} {\bibinfo {title} {Hyperbolic lattices in circuit quantum electrodynamics},\ }\href@noop {} {\bibfield  {journal} {\bibinfo  {journal} {Nature}\ }\textbf {\bibinfo {volume} {571}},\ \bibinfo {pages} {45} (\bibinfo {year} {2019})}\BibitemShut {NoStop}%
\bibitem [{\citenamefont {Bzdu{\v{s}}ek}\ and\ \citenamefont {Maciejko}(2022)}]{Bzdusek2022}%
  \BibitemOpen
  \bibfield  {author} {\bibinfo {author} {\bibfnamefont {T.}~\bibnamefont {Bzdu{\v{s}}ek}}\ and\ \bibinfo {author} {\bibfnamefont {J.}~\bibnamefont {Maciejko}},\ }\bibfield  {title} {\bibinfo {title} {Flat bands and band-touching from real-space topology in hyperbolic lattices},\ }\href@noop {} {\bibfield  {journal} {\bibinfo  {journal} {Physical Review B}\ }\textbf {\bibinfo {volume} {106}},\ \bibinfo {pages} {155146} (\bibinfo {year} {2022})}\BibitemShut {NoStop}%
\bibitem [{\citenamefont {Koll{\'a}r}\ \emph {et~al.}(2020)\citenamefont {Koll{\'a}r}, \citenamefont {Fitzpatrick}, \citenamefont {Sarnak},\ and\ \citenamefont {Houck}}]{Kollar2020}%
  \BibitemOpen
  \bibfield  {author} {\bibinfo {author} {\bibfnamefont {A.~J.}\ \bibnamefont {Koll{\'a}r}}, \bibinfo {author} {\bibfnamefont {M.}~\bibnamefont {Fitzpatrick}}, \bibinfo {author} {\bibfnamefont {P.}~\bibnamefont {Sarnak}},\ and\ \bibinfo {author} {\bibfnamefont {A.~A.}\ \bibnamefont {Houck}},\ }\bibfield  {title} {\bibinfo {title} {Line-graph lattices: Euclidean and non-euclidean flat bands, and implementations in circuit quantum electrodynamics},\ }\href@noop {} {\bibfield  {journal} {\bibinfo  {journal} {Communications in Mathematical Physics}\ }\textbf {\bibinfo {volume} {376}},\ \bibinfo {pages} {1909} (\bibinfo {year} {2020})}\BibitemShut {NoStop}%
\bibitem [{\citenamefont {Beardon}(2012)}]{beardon2012geometry}%
  \BibitemOpen
  \bibfield  {author} {\bibinfo {author} {\bibfnamefont {A.~F.}\ \bibnamefont {Beardon}},\ }\href@noop {} {\emph {\bibinfo {title} {The Geometry of Discrete Groups}}},\ Vol.~\bibinfo {volume} {91}\ (\bibinfo  {publisher} {Springer},\ \bibinfo {year} {2012})\BibitemShut {NoStop}%
\bibitem [{\citenamefont {Katok}(1992)}]{katok1992fuchsian}%
  \BibitemOpen
  \bibfield  {author} {\bibinfo {author} {\bibfnamefont {S.}~\bibnamefont {Katok}},\ }\href@noop {} {\emph {\bibinfo {title} {Fuchsian Groups}}}\ (\bibinfo  {publisher} {University of Chicago Press},\ \bibinfo {year} {1992})\BibitemShut {NoStop}%
\bibitem [{\citenamefont {Stillwell}(1995)}]{stillwell1995geometry}%
  \BibitemOpen
  \bibfield  {author} {\bibinfo {author} {\bibfnamefont {J.}~\bibnamefont {Stillwell}},\ }\href@noop {} {\emph {\bibinfo {title} {Geometry of Surfaces}}}\ (\bibinfo  {publisher} {Springer},\ \bibinfo {year} {1995})\BibitemShut {NoStop}%
\bibitem [{\citenamefont {Do~Carmo}(2016)}]{do2016differential}%
  \BibitemOpen
  \bibfield  {author} {\bibinfo {author} {\bibfnamefont {M.~P.}\ \bibnamefont {Do~Carmo}},\ }\href@noop {} {\emph {\bibinfo {title} {Differential Geometry of Curves and Surfaces: Revised and Updated Second Edition}}}\ (\bibinfo  {publisher} {Courier Dover Publications},\ \bibinfo {year} {2016})\BibitemShut {NoStop}%
\bibitem [{\citenamefont {Schmutz}(1993)}]{schmutz1993reimann}%
  \BibitemOpen
  \bibfield  {author} {\bibinfo {author} {\bibfnamefont {P.}~\bibnamefont {Schmutz}},\ }\bibfield  {title} {\bibinfo {title} {Riemann surfaces with shortest geodesic of maximal length},\ }\href@noop {} {\bibfield  {journal} {\bibinfo  {journal} {Geometric and Functional Analysis (GAFA)}\ }\textbf {\bibinfo {volume} {3}},\ \bibinfo {pages} {564} (\bibinfo {year} {1993})}\BibitemShut {NoStop}%
\bibitem [{\citenamefont {Conder}\ and\ \citenamefont {Dobcs{\'a}nyi}(2005)}]{conder2005applications}%
  \BibitemOpen
  \bibfield  {author} {\bibinfo {author} {\bibfnamefont {M.}~\bibnamefont {Conder}}\ and\ \bibinfo {author} {\bibfnamefont {P.}~\bibnamefont {Dobcs{\'a}nyi}},\ }\bibfield  {title} {\bibinfo {title} {Applications and adaptations of the low index subgroups procedure},\ }\href@noop {} {\bibfield  {journal} {\bibinfo  {journal} {Mathematics of Computation}\ }\textbf {\bibinfo {volume} {74}},\ \bibinfo {pages} {485} (\bibinfo {year} {2005})}\BibitemShut {NoStop}%
\bibitem [{\citenamefont {Firth}(2005)}]{firth2005algorithm}%
  \BibitemOpen
  \bibfield  {author} {\bibinfo {author} {\bibfnamefont {D.}~\bibnamefont {Firth}},\ }\emph {\bibinfo {title} {An algorithm to find normal subgroups of a finitely presented group, up to a given finite index}},\ \href@noop {} {Ph.D. thesis},\ \bibinfo  {school} {University of Warwick} (\bibinfo {year} {2005})\BibitemShut {NoStop}%
\bibitem [{GAP(2021)}]{GAP4}%
  \BibitemOpen
  \href {https://www.gap-system.org} {\emph {\bibinfo {title} {GAP -- Groups, Algorithms, and Programming, Version 4.11.1}}},\ \bibinfo {organization} {The GAP Group} (\bibinfo {year} {2021})\BibitemShut {NoStop}%
\bibitem [{\citenamefont {Rober}(2024)}]{LINS0.9}%
  \BibitemOpen
  \bibfield  {author} {\bibinfo {author} {\bibfnamefont {F.}~\bibnamefont {Rober}},\ }\href@noop {} {\bibinfo {title} {Lins: provides an algorithm for computing the normal subgroups of a finitely presented group up to some given index bound, version 0.9}},\ \bibinfo {howpublished} {\\url{https://gap-packages.github.io/LINS/}} (\bibinfo {year} {2024}),\ \bibinfo {note} {gAP package}\BibitemShut {NoStop}%
\bibitem [{\citenamefont {Dietze}\ and\ \citenamefont {Schaps}(1974)}]{dietze1974determining}%
  \BibitemOpen
  \bibfield  {author} {\bibinfo {author} {\bibfnamefont {A.}~\bibnamefont {Dietze}}\ and\ \bibinfo {author} {\bibfnamefont {M.}~\bibnamefont {Schaps}},\ }\bibfield  {title} {\bibinfo {title} {Determining subgroups of a given finite index in a finitely presented group},\ }\href@noop {} {\bibfield  {journal} {\bibinfo  {journal} {Canadian Journal of Mathematics}\ }\textbf {\bibinfo {volume} {26}},\ \bibinfo {pages} {769} (\bibinfo {year} {1974})}\BibitemShut {NoStop}%
\bibitem [{\citenamefont {Todd}\ and\ \citenamefont {Coxeter}(1936)}]{todd1936practical}%
  \BibitemOpen
  \bibfield  {author} {\bibinfo {author} {\bibfnamefont {J.~A.}\ \bibnamefont {Todd}}\ and\ \bibinfo {author} {\bibfnamefont {H.~S.~M.}\ \bibnamefont {Coxeter}},\ }\bibfield  {title} {\bibinfo {title} {A practical method for enumerating cosets of a finite abstract group},\ }\href@noop {} {\bibfield  {journal} {\bibinfo  {journal} {Proceedings of the Edinburgh Mathematical Society}\ }\textbf {\bibinfo {volume} {5}},\ \bibinfo {pages} {26} (\bibinfo {year} {1936})}\BibitemShut {NoStop}%
\bibitem [{\citenamefont {Chen}\ \emph {et~al.}(2024)\citenamefont {Chen}, \citenamefont {Maciejko},\ and\ \citenamefont {Boettcher}}]{chen2024anderson}%
  \BibitemOpen
  \bibfield  {author} {\bibinfo {author} {\bibfnamefont {A.}~\bibnamefont {Chen}}, \bibinfo {author} {\bibfnamefont {J.}~\bibnamefont {Maciejko}},\ and\ \bibinfo {author} {\bibfnamefont {I.}~\bibnamefont {Boettcher}},\ }\bibfield  {title} {\bibinfo {title} {Anderson localization transition in disordered hyperbolic lattices},\ }\href@noop {} {\bibfield  {journal} {\bibinfo  {journal} {Physical Review Letters}\ }\textbf {\bibinfo {volume} {133}},\ \bibinfo {pages} {066101} (\bibinfo {year} {2024})}\BibitemShut {NoStop}%
\bibitem [{\citenamefont {Maciejko}\ and\ \citenamefont {Rayan}(2022)}]{maciejko2022automorphic}%
  \BibitemOpen
  \bibfield  {author} {\bibinfo {author} {\bibfnamefont {J.}~\bibnamefont {Maciejko}}\ and\ \bibinfo {author} {\bibfnamefont {S.}~\bibnamefont {Rayan}},\ }\bibfield  {title} {\bibinfo {title} {Automorphic bloch theorems for hyperbolic lattices},\ }\href@noop {} {\bibfield  {journal} {\bibinfo  {journal} {Proceedings of the National Academy of Sciences}\ }\textbf {\bibinfo {volume} {119}},\ \bibinfo {pages} {e2116869119} (\bibinfo {year} {2022})}\BibitemShut {NoStop}%
\bibitem [{\citenamefont {Tummuru}\ \emph {et~al.}(2024)\citenamefont {Tummuru}, \citenamefont {Chen}, \citenamefont {Lenggenhager}, \citenamefont {Neupert}, \citenamefont {Maciejko},\ and\ \citenamefont {Bzdu{\v{s}}ek}}]{tummuru2024hyperbolic}%
  \BibitemOpen
  \bibfield  {author} {\bibinfo {author} {\bibfnamefont {T.}~\bibnamefont {Tummuru}}, \bibinfo {author} {\bibfnamefont {A.}~\bibnamefont {Chen}}, \bibinfo {author} {\bibfnamefont {P.~M.}\ \bibnamefont {Lenggenhager}}, \bibinfo {author} {\bibfnamefont {T.}~\bibnamefont {Neupert}}, \bibinfo {author} {\bibfnamefont {J.}~\bibnamefont {Maciejko}},\ and\ \bibinfo {author} {\bibfnamefont {T.}~\bibnamefont {Bzdu{\v{s}}ek}},\ }\bibfield  {title} {\bibinfo {title} {Hyperbolic non-abelian semimetal},\ }\href@noop {} {\bibfield  {journal} {\bibinfo  {journal} {Physical Review Letters}\ }\textbf {\bibinfo {volume} {132}},\ \bibinfo {pages} {206601} (\bibinfo {year} {2024})}\BibitemShut {NoStop}%
\bibitem [{\citenamefont {Freedman}\ \emph {et~al.}(2003)\citenamefont {Freedman}, \citenamefont {Kitaev}, \citenamefont {Larsen},\ and\ \citenamefont {Wang}}]{freedman2003topological}%
  \BibitemOpen
  \bibfield  {author} {\bibinfo {author} {\bibfnamefont {M.}~\bibnamefont {Freedman}}, \bibinfo {author} {\bibfnamefont {A.}~\bibnamefont {Kitaev}}, \bibinfo {author} {\bibfnamefont {M.}~\bibnamefont {Larsen}},\ and\ \bibinfo {author} {\bibfnamefont {Z.}~\bibnamefont {Wang}},\ }\bibfield  {title} {\bibinfo {title} {Topological quantum computation},\ }\href@noop {} {\bibfield  {journal} {\bibinfo  {journal} {Bulletin of the American Mathematical Society}\ }\textbf {\bibinfo {volume} {40}},\ \bibinfo {pages} {31} (\bibinfo {year} {2003})}\BibitemShut {NoStop}%
\bibitem [{\citenamefont {Gottesman}(1997)}]{gottesman1997stabilizer}%
  \BibitemOpen
  \bibfield  {author} {\bibinfo {author} {\bibfnamefont {D.}~\bibnamefont {Gottesman}},\ }\emph {\bibinfo {title} {Stabilizer codes and quantum error correction}},\ \href@noop {} {Ph.D. thesis},\ \bibinfo  {school} {California Institute of Technology} (\bibinfo {year} {1997})\BibitemShut {NoStop}%
\bibitem [{\citenamefont {Hatcher}(2002)}]{hatcher2002algebraic}%
  \BibitemOpen
  \bibfield  {author} {\bibinfo {author} {\bibfnamefont {A.}~\bibnamefont {Hatcher}},\ }\href@noop {} {\emph {\bibinfo {title} {Algebraic Topology}}}\ (\bibinfo  {publisher} {Cambridge University Press},\ \bibinfo {year} {2002})\BibitemShut {NoStop}%
\bibitem [{\citenamefont {Paton}(1969)}]{paton1969algorithm}%
  \BibitemOpen
  \bibfield  {author} {\bibinfo {author} {\bibfnamefont {K.}~\bibnamefont {Paton}},\ }\bibfield  {title} {\bibinfo {title} {An algorithm for finding a fundamental set of cycles of a graph},\ }\href@noop {} {\bibfield  {journal} {\bibinfo  {journal} {Communications of the ACM}\ }\textbf {\bibinfo {volume} {12}},\ \bibinfo {pages} {514} (\bibinfo {year} {1969})}\BibitemShut {NoStop}%
\bibitem [{\citenamefont {Kavitha}\ \emph {et~al.}(2008)\citenamefont {Kavitha}, \citenamefont {Mehlhorn}, \citenamefont {Michail},\ and\ \citenamefont {Paluch}}]{kavitha2008algorithm}%
  \BibitemOpen
  \bibfield  {author} {\bibinfo {author} {\bibfnamefont {T.}~\bibnamefont {Kavitha}}, \bibinfo {author} {\bibfnamefont {K.}~\bibnamefont {Mehlhorn}}, \bibinfo {author} {\bibfnamefont {D.}~\bibnamefont {Michail}},\ and\ \bibinfo {author} {\bibfnamefont {K.~E.}\ \bibnamefont {Paluch}},\ }\bibfield  {title} {\bibinfo {title} {An algorithm for minimum cycle basis of graphs},\ }\href@noop {} {\bibfield  {journal} {\bibinfo  {journal} {Algorithmica}\ }\textbf {\bibinfo {volume} {52}},\ \bibinfo {pages} {333} (\bibinfo {year} {2008})}\BibitemShut {NoStop}%
\bibitem [{\citenamefont {Mahmoud}\ and\ \citenamefont {Ali}(2026{\natexlab{a}})}]{mahmoud2026hqecc_threshold}%
  \BibitemOpen
  \bibfield  {author} {\bibinfo {author} {\bibfnamefont {A.~A.}\ \bibnamefont {Mahmoud}}\ and\ \bibinfo {author} {\bibfnamefont {K.~M.}\ \bibnamefont {Ali}},\ }\href {https://doi.org/10.5281/zenodo.18784824} {\bibinfo {title} {{HQECC-Threshold Simulation Code}}},\ \bibinfo {howpublished} {Zenodo} (\bibinfo {year} {2026}{\natexlab{a}})\BibitemShut {NoStop}%
\bibitem [{\citenamefont {Mahmoud}\ and\ \citenamefont {Ali}(2026{\natexlab{b}})}]{mahmoud_hqecc_github}%
  \BibitemOpen
  \bibfield  {author} {\bibinfo {author} {\bibfnamefont {A.~A.}\ \bibnamefont {Mahmoud}}\ and\ \bibinfo {author} {\bibfnamefont {K.~M.}\ \bibnamefont {Ali}},\ }\href@noop {} {\bibinfo {title} {{HQECC-Threshold GitHub Repository}}} (\bibinfo {year} {2026}{\natexlab{b}}),\ \bibinfo {note} {\url{https://github.com/AhmeedAdelMahmoud/HQECC-Threshold}. Accessed: 2026-03-15}\BibitemShut {NoStop}%
\bibitem [{\citenamefont {Delfosse}\ and\ \citenamefont {Nickerson}(2021)}]{delfosse2021almost}%
  \BibitemOpen
  \bibfield  {author} {\bibinfo {author} {\bibfnamefont {N.}~\bibnamefont {Delfosse}}\ and\ \bibinfo {author} {\bibfnamefont {N.~H.}\ \bibnamefont {Nickerson}},\ }\bibfield  {title} {\bibinfo {title} {Almost-linear time decoding algorithm for topological codes},\ }\href@noop {} {\bibfield  {journal} {\bibinfo  {journal} {Quantum}\ }\textbf {\bibinfo {volume} {5}},\ \bibinfo {pages} {595} (\bibinfo {year} {2021})}\BibitemShut {NoStop}%
\bibitem [{\citenamefont {Huang}\ \emph {et~al.}(2020)\citenamefont {Huang}, \citenamefont {Newman},\ and\ \citenamefont {Brown}}]{huang2020fault}%
  \BibitemOpen
  \bibfield  {author} {\bibinfo {author} {\bibfnamefont {S.}~\bibnamefont {Huang}}, \bibinfo {author} {\bibfnamefont {M.}~\bibnamefont {Newman}},\ and\ \bibinfo {author} {\bibfnamefont {K.~R.}\ \bibnamefont {Brown}},\ }\bibfield  {title} {\bibinfo {title} {Fault-tolerant weighted union-find decoding on the toric code},\ }\href@noop {} {\bibfield  {journal} {\bibinfo  {journal} {Physical Review A}\ }\textbf {\bibinfo {volume} {102}},\ \bibinfo {pages} {012419} (\bibinfo {year} {2020})}\BibitemShut {NoStop}%
\bibitem [{\citenamefont {Old}\ and\ \citenamefont {Rispler}(2023)}]{old2023generalized}%
  \BibitemOpen
  \bibfield  {author} {\bibinfo {author} {\bibfnamefont {J.}~\bibnamefont {Old}}\ and\ \bibinfo {author} {\bibfnamefont {M.}~\bibnamefont {Rispler}},\ }\bibfield  {title} {\bibinfo {title} {Generalized belief propagation algorithms for decoding of surface codes},\ }\href@noop {} {\bibfield  {journal} {\bibinfo  {journal} {Quantum}\ }\textbf {\bibinfo {volume} {7}},\ \bibinfo {pages} {1037} (\bibinfo {year} {2023})}\BibitemShut {NoStop}%
\bibitem [{\citenamefont {Delfosse}(2013)}]{delfosse2013tradeoffs}%
  \BibitemOpen
  \bibfield  {author} {\bibinfo {author} {\bibfnamefont {N.}~\bibnamefont {Delfosse}},\ }\bibfield  {title} {\bibinfo {title} {Tradeoffs for reliable quantum information storage in surface codes and color codes},\ }in\ \href@noop {} {\emph {\bibinfo {booktitle} {2013 IEEE International Symposium on Information Theory}}}\ (\bibinfo {organization} {IEEE},\ \bibinfo {year} {2013})\ pp.\ \bibinfo {pages} {917--921}\BibitemShut {NoStop}%
\bibitem [{\citenamefont {Davydova}\ \emph {et~al.}(2023)\citenamefont {Davydova}, \citenamefont {Tantivasadakarn},\ and\ \citenamefont {Balasubramanian}}]{davydova2023floquet}%
  \BibitemOpen
  \bibfield  {author} {\bibinfo {author} {\bibfnamefont {M.}~\bibnamefont {Davydova}}, \bibinfo {author} {\bibfnamefont {N.}~\bibnamefont {Tantivasadakarn}},\ and\ \bibinfo {author} {\bibfnamefont {S.}~\bibnamefont {Balasubramanian}},\ }\bibfield  {title} {\bibinfo {title} {Floquet codes without parent subsystem codes},\ }\href@noop {} {\bibfield  {journal} {\bibinfo  {journal} {PRX Quantum}\ }\textbf {\bibinfo {volume} {4}},\ \bibinfo {pages} {020341} (\bibinfo {year} {2023})}\BibitemShut {NoStop}%
\bibitem [{\citenamefont {Kesselring}\ \emph {et~al.}(2024)\citenamefont {Kesselring}, \citenamefont {Magdalena de~la Fuente}, \citenamefont {Thomsen}, \citenamefont {Eisert}, \citenamefont {Bartlett},\ and\ \citenamefont {Brown}}]{kesselring2024anyon}%
  \BibitemOpen
  \bibfield  {author} {\bibinfo {author} {\bibfnamefont {M.~S.}\ \bibnamefont {Kesselring}}, \bibinfo {author} {\bibfnamefont {J.~C.}\ \bibnamefont {Magdalena de~la Fuente}}, \bibinfo {author} {\bibfnamefont {F.}~\bibnamefont {Thomsen}}, \bibinfo {author} {\bibfnamefont {J.}~\bibnamefont {Eisert}}, \bibinfo {author} {\bibfnamefont {S.~D.}\ \bibnamefont {Bartlett}},\ and\ \bibinfo {author} {\bibfnamefont {B.~J.}\ \bibnamefont {Brown}},\ }\bibfield  {title} {\bibinfo {title} {Anyon condensation and the color code},\ }\href@noop {} {\bibfield  {journal} {\bibinfo  {journal} {PRX Quantum}\ }\textbf {\bibinfo {volume} {5}},\ \bibinfo {pages} {010342} (\bibinfo {year} {2024})}\BibitemShut {NoStop}%
\bibitem [{\citenamefont {Soares~Jr.}\ and\ \citenamefont {da~Silva}(2018)}]{soares2018hyperbolic}%
  \BibitemOpen
  \bibfield  {author} {\bibinfo {author} {\bibfnamefont {W.~S.}\ \bibnamefont {Soares~Jr.}}\ and\ \bibinfo {author} {\bibfnamefont {E.~B.}\ \bibnamefont {da~Silva}},\ }\bibfield  {title} {\bibinfo {title} {Hyperbolic quantum color codes},\ }\href@noop {} {\bibfield  {journal} {\bibinfo  {journal} {Quantum Information and Computation}\ }\textbf {\bibinfo {volume} {18}},\ \bibinfo {pages} {306} (\bibinfo {year} {2018})}\BibitemShut {NoStop}%
\bibitem [{\citenamefont {Johnson}(1975)}]{johnson1975finding}%
  \BibitemOpen
  \bibfield  {author} {\bibinfo {author} {\bibfnamefont {D.~B.}\ \bibnamefont {Johnson}},\ }\bibfield  {title} {\bibinfo {title} {Finding all the elementary circuits of a directed graph},\ }\href@noop {} {\bibfield  {journal} {\bibinfo  {journal} {SIAM Journal on Computing}\ }\textbf {\bibinfo {volume} {4}},\ \bibinfo {pages} {77} (\bibinfo {year} {1975})}\BibitemShut {NoStop}%
\bibitem [{\citenamefont {Birmel{\'e}}\ \emph {et~al.}(2013)\citenamefont {Birmel{\'e}}, \citenamefont {Ferreira}, \citenamefont {Grossi}, \citenamefont {Marino}, \citenamefont {Pisanti}, \citenamefont {Rizzi},\ and\ \citenamefont {Sacomoto}}]{birmele2013optimal}%
  \BibitemOpen
  \bibfield  {author} {\bibinfo {author} {\bibfnamefont {E.}~\bibnamefont {Birmel{\'e}}}, \bibinfo {author} {\bibfnamefont {R.}~\bibnamefont {Ferreira}}, \bibinfo {author} {\bibfnamefont {R.}~\bibnamefont {Grossi}}, \bibinfo {author} {\bibfnamefont {A.}~\bibnamefont {Marino}}, \bibinfo {author} {\bibfnamefont {N.}~\bibnamefont {Pisanti}}, \bibinfo {author} {\bibfnamefont {R.}~\bibnamefont {Rizzi}},\ and\ \bibinfo {author} {\bibfnamefont {G.}~\bibnamefont {Sacomoto}},\ }\bibfield  {title} {\bibinfo {title} {Optimal listing of cycles and st-paths in undirected graphs},\ }in\ \href@noop {} {\emph {\bibinfo {booktitle} {Proceedings of the twenty-fourth annual ACM-SIAM symposium on Discrete algorithms}}}\ (\bibinfo {organization} {SIAM},\ \bibinfo {year} {2013})\ pp.\ \bibinfo {pages} {1884--1896}\BibitemShut {NoStop}%
\bibitem [{\citenamefont {Gupta}\ and\ \citenamefont {Suzumura}(2021)}]{gupta2021finding}%
  \BibitemOpen
  \bibfield  {author} {\bibinfo {author} {\bibfnamefont {A.}~\bibnamefont {Gupta}}\ and\ \bibinfo {author} {\bibfnamefont {T.}~\bibnamefont {Suzumura}},\ }\bibfield  {title} {\bibinfo {title} {Finding all bounded-length simple cycles in a directed graph},\ }\href@noop {} {\bibfield  {journal} {\bibinfo  {journal} {arXiv preprint arXiv:2105.10094}\ } (\bibinfo {year} {2021})}\BibitemShut {NoStop}%
\end{thebibliography}%

\end{document}